\def\floor#1{\mathop{\lt\lfloor#1\rt\rfloor}}
\def\poly{\mathrm{poly}}
\def\rt{\right}
\def\lt{\!\left}
\def\union{\cup}
\def\NP{\mathrm{NP}}
\def\PoA{\mathrm{PoA}}
\def\Th{\mathrm{\Theta}}
\def\Om{\mathrm{\Omega}}
\def\eps{\varepsilon}
\def\e{\epsilon}
\def\reals{\mathrm{I\!R}}
\def\P{\mathcal{P}}
\def\G{\mathcal{G}}
\def\H{\mathcal{H}}
\def\I{\mathcal{I}}
\def\Bst{B^\ast}
\def\ft{\tilde{f}}
\def\gt{\tilde{g}}
\def\Bt{\tilde{B}}
\def\Pone{\mathrm{ParRidBC}}
\def\Ptwo{\mathrm{BSubNBC}}
\def\DD{\mathrm{2\mbox{-}DDP}}
\def\YES{\mbox{\sc yes}}
\def\NO{\mbox{\sc no}}
\def\fig#1#2#3{\begin{figure}[t]%
\centerline{\includegraphics[width=#1]{#2}}%
\vspace*{-2mm}\caption{#3}\vspace*{-5mm}\end{figure}}
\title{On the Hardness of Network Design for Bottleneck Routing Games%
\thanks{This work was supported by the project Algorithmic Game Theory, co-financed by the European Union (European Social Fund - ESF) and Greek national funds, through the Operational Program ``Education and Lifelong Learning'', under the research funding program Thales, by an NTUA Basic Research Grant (PEBE 2009), by the ERC project RIMACO, and by the EU-FP7 Project e-Compass.}}
\author{Dimitris Fotakis\inst{1} \and Alexis C. Kaporis\inst{2}
\and Thanasis Lianeas\inst{1} \and Paul G. Spirakis\inst{3,4}}
\institute{School of Electrical and Computer Engineering, National Technical
University of Athens, 15780 Athens, Greece.
\and
Department of Information and Communication Systems Engineering,
University of the Aegean, Greece.
\and
Department of Computer Engineering and Informatics, University of Patras,
26500 Patras, Greece.
\and
Research Academic Computer Technology Institute, N.~Kazantzaki Str.,
University Campus, 26500 Patras, Greece.\\
Email:~{\tt fotakis@cs.ntua.gr, kaporisa@gmail.com, tlianeas@mail.ntua.gr, spirakis@cti.gr}}
\begin{document}
\maketitle

\begin{abstract}
In routing games, the selfish behavior of the players may lead to a degradation of the network performance at equilibrium. In more than a few cases however, the equilibrium performance can be significantly improved if we remove some edges from the network. This counterintuitive fact, widely known as Braess's paradox, gives rise to the (selfish) network design problem, where we seek to recognize routing games suffering from the paradox, and to improve their equilibrium performance by edge removal.
In this work, we investigate the computational complexity and the approximability of the network design problem for non-atomic bottleneck routing games, where the individual cost of each player is the bottleneck cost of her path, and the social cost is the bottleneck cost of the network, i.e. the maximum latency of a used edge.
We first show that bottleneck routing games do not suffer from Braess's paradox either if the network is series-parallel, or if we consider only subpath-optimal Nash flows.
On the negative side, we prove that even for games with strictly increasing linear latencies, it is $\NP$-hard not only to recognize instances suffering from the paradox, but also to distinguish between instances for which the Price of Anarchy ($\PoA$) can decrease to $1$ and instances for which the $\PoA$ cannot be improved by edge removal, even if their $\PoA$ is as large as $\Om(n^{0.121})$. This implies that the network design problem for linear bottleneck routing games is $\NP$-hard to approximate within a factor of $O(n^{0.121-\eps})$, for any constant $\eps > 0$. The proof is based on a recursive construction of hard instances that carefully exploits the properties of bottleneck routing games, and may be of independent interest.
On the positive side, we present an algorithm for finding a subnetwork that is almost optimal w.r.t. the bottleneck cost of its worst Nash flow, when the worst Nash flow in the best subnetwork routes a non-negligible amount of flow on all used edges. We show that the running time is essentially determined by the total number of paths in the network, and is quasipolynomial when the number of paths is quasipolynomial.
\end{abstract}

\pagestyle{plain}
\pagenumbering{arabic}
\thispagestyle{empty}
\setcounter{page}{0}
\newpage

\section{Introduction}
\label{s:intro}

An typical instance of a non-atomic \emph{bottleneck routing game} consists of a directed network, with an origin $s$ and a destination $t$, where each edge is associated with a non-decreasing function that determines the edge's latency as a function of its traffic. A rate of traffic is controlled by an infinite population of players, each willing to route a negligible amount of traffic through an $s-t$ path. The players are non-cooperative and selfish, and seek to minimize the maximum edge latency, a.k.a. the \emph{bottleneck cost} of their path. Thus, the players reach a \emph{Nash equilibrium flow}, or simply a \emph{Nash flow}, where they all use paths with a common locally minimum bottleneck cost. Bottleneck routing games and their variants have received considerable attention due to their practical applications in communication networks (see e.g., \cite{CGK05,BO07} and the references therein).

\smallskip{\bf Previous Work and Motivation.}
Every bottleneck routing game is known to admit a Nash flow that is optimal for the network, in the sense that it minimizes the maximum latency on any used edge, a.k.a. the bottleneck cost of the network (see e.g., \cite[Corollary~2]{BO07}). On the other hand, bottleneck routing games usually admit many different Nash flows, some with a bottleneck cost quite far from the optimum. Hence, there has been a considerable interest in quantifying the performance degradation due to the players' non-cooperative and selfish behavior in (several variants of) bottleneck routing games. This is typically measured by the \emph{Price of Anarchy} ($\PoA$) \cite{KP99}, which is the ratio of the bottleneck cost of the worst Nash flow to the optimal bottleneck cost of the network.

Simple examples (see e.g., \cite[Figure~2]{CDR06}) demonstrate that the $\PoA$ of bottleneck routing games with linear latency functions can be as large as $\Om(n)$, where $n$ is the number of vertices of the network.
For atomic splittable bottleneck routing games, where the population of players is finite, and each player controls a non-negligible amount of traffic which can be split among different paths, Banner and Orda \cite{BO07} observed that the $\PoA$ can be unbounded, even for very simple networks, if the players have different origins and destinations and the latency functions are exponential. On the other hand, Banner and Orda proved that if the players use paths that, as a secondary objective, minimize the number of bottleneck edges, then all Nash flows are optimal.
For a variant of non-atomic bottleneck routing games, where the social cost is the average (instead of the maximum) bottleneck cost of the players, Cole, Dodis, and Roughgarden \cite{CDR06} proved that the $\PoA$ is $4/3$, if the latency functions are affine and a subclass of Nash flows, called \emph{subpath-optimal Nash flows}, is only considered. Subsequently, Mazalov et al. \cite{MMST06} studied the inefficiency of the best Nash flow under this notion of social cost.

For atomic unsplittable bottleneck routing games, where each player routes a unit of traffic through a single $s-t$ path, Banner and Orda \cite{BO07} proved that for polynomial latency functions of degree $d$, the $\PoA$ is $O(m^d)$, where $m$ is the number of edges of the network. On the other hand, Epstein, Feldman, and Mansour \cite{EFM09} proved that for series-parallel networks with arbitrary latency functions, all Nash flows are optimal.
Subsequently, Busch and Magdon-Ismail \cite{BMI09} proved that the $\PoA$ of atomic unsplittable bottleneck routing games with identity latency functions can be bounded in terms of natural topological properties of the network. In particular, they proved that the $\PoA$ of such games is bounded from above by $O(l+\log n)$, where $l$ is the length of the longest $s-t$ path, and by $O(k^2+\log^2 n)$, where $k$ is length of the longest circuit. 

With the $\PoA$ of bottleneck routing games so high and crucially depending on topological properties of the network, a natural approach to improving the performance at equilibrium is to exploit the essence of Braess's paradox \cite{Bra68}, namely that removing some edges may change the network topology (e.g., it may decrease the length of the longest path or cycle), and significantly improve the bottleneck cost of the worst Nash flow (see e.g., Fig.~\ref{fig:braess}). This approach gives rise to the (selfish) \emph{network design problem}, where we seek to recognize bottleneck routing games suffering from the paradox, and to improve the bottleneck cost of the worst Nash flow by edge removal. In particular, given a bottleneck routing game, we seek for the \emph{best subnetwork}, namely, the subnetwork for which the bottleneck cost of the worst Nash flow is best possible.
In this setting, one may distinguish two extreme classes of instances: \emph{paradox-free} instances, where edge removal cannot improve the bottleneck cost of the worst Nash flow, and \emph{paradox-ridden} instances, where the bottleneck cost of the worst Nash flow in the best subnetwork is equal to the optimal bottleneck cost of the original network (see also \cite{Rou06,FKS09}).

The approximability of selective network design, a generalization of network design where we cannot remove certain edges, was considered by Hou and Zhang \cite{HZ07}. For atomic unsplittable bottleneck routing games with a different traffic rate and a different origin and destination for each player, they proved that if the latency functions are polynomials of degree $d$, it is $\NP$-hard to approximate selective network design within a factor of $O(m^{d-\eps})$, for any constant $\eps > 0$. Moreover, for atomic $k$-splittable bottleneck routing games with multiple origin-destination pairs, they proved that selective network design is $\NP$-hard to approximate within any constant factor.

However, a careful look at the reduction of \cite{HZ07} reveals that their strong inapproximability results crucially depend on both (i) that we can only remove certain edges from the network, so that the subnetwork actually causing a high $\PoA$ cannot be destroyed, and (ii) that the players have different  origins and destinations (and also are atomic and have different traffic rates).
As for the importance of (ii), in a different setting, where the players' individual cost is the sum of edge latencies on their path and the social cost is the bottleneck cost of the network, it is known that Braess's paradox can be dramatically more severe for instances with multiple origin-destination pairs than for instances with a single origin-destination pair. More precisely, Lin et al. \cite{LRT04} proved that if the players have a common origin and destination, the removal of at most $k$ edges from the network cannot improve the equilibrium bottleneck cost by a factor greater than $k+1$. On the other hand, Lin et al. \cite{LRTW05} presented an instance with two origin-destination pairs where the removal of a single edge improves the the equilibrium bottleneck cost by a factor of $2^{\Om(n)}$.
Therefore, both at the technical and at the conceptual level, the inapproximability results of \cite{HZ07} do not really shed light on the approximability of the (simple, non-selective) network design problem in the simplest, and most interesting, setting of non-atomic bottleneck routing games with a common origin-destination pair for all players.

\fig{0.7\textwidth}{figures/thetagraph}%
{\label{fig:braess}An example of Braess's paradox for bottleneck routing games. We consider a routing instance with identity latency functions and a unit of traffic to be routed from $s$ to $t$. The worst Nash flow, in (a), routes all flow through the path $(s, u, v, t)$, and has a bottleneck cost of $1$. On the other hand, the optimal flow routes $1/2$ unit through the path $(s, u, t)$ and $1/2$ unit through the path $(s, v, t)$, and has a bottleneck cost of $1/2$. Hence, $\PoA = 2$. In the subnetwork (b), obtained by removing the edge $(u, v)$, we have a unique Nash flow that coincides with the optimal flow, and thus the $\PoA$ becomes $1$. Hence the network on the left is \emph{paradox-ridden}, and the network on the right is the \emph{best subnetwork} of it.}

\smallskip\noindent{\bf Contribution.}
Hence, in this work, we investigate the approximability of the network design problem for the simplest, and seemingly easier to approximate, variant of non-atomic bottleneck routing games (with a single origin-destination pair). Our main result is that network design is hard to approximate within reasonable factors, and holds even for the special case of strictly increasing linear latencies. To the best of our knowledge, this is the first work that investigates the impact of Braess's paradox and the approximability of the network design problem for the basic variant of bottleneck routing games.

In Section~\ref{s:easy-cases}, we use techniques similar to those in \cite{EFM09,CDR06}, and show that bottleneck routing games do not suffer from Braess's paradox either if the network is series-parallel, or if we consider only subpath-optimal Nash flows.

On the negative side, we employ, in Section~\ref{s:np-hard}, a reduction from the $2$-Directed Disjoint Paths problem, and show that for linear bottleneck routing games, it is $\NP$-hard to recognize paradox-ridden instances (Lemma~\ref{l:gap}). In fact, the reduction shows that it is $\NP$-hard to distinguish between paradox-ridden instances and paradox-free instances, even if their $\PoA$ is equal to $4/3$, and thus, it is $\NP$-hard to approximate the network design problem within a factor less than $4/3$.

In Section~\ref{s:inapprox}, we apply essentially the same reduction, but in a recursive way, and obtain a much stronger inapproximability result. In particular, we assume the existence of a $\gamma$-gap instance, which establishes that network design is inapproximable within a factor less than $\gamma$, and show that the construction of Lemma~\ref{l:gap}, but with some edges replaced by copies of the gap instance, amplifies the inapproximability threshold by a factor of $4/3$, while it increases the size of the network by roughly a factor of $8$ (Lemma~\ref{l:rec-gap}). Therefore, starting from the $4/3$-gap instance of Lemma~\ref{l:gap}, and recursively applying this construction a logarithmic number times, we show that it is $\NP$-hard to approximate the network design problem for linear bottleneck routing games within a factor of $O(n^{0.121-\eps})$, for any constant $\eps > 0$. An interesting technical point is that we manage to show this inapproximability result, even though we do not know how to efficiently compute the worst equilibrium bottleneck cost of a given subnetwork. Hence, our reduction uses a certain subnetwork structure to identify good approximations to the best subnetwork. To the best of our knowledge, this is the first rime that a similar recursive construction is used to amplify the inapproximability threshold of the network design problem, and of any other optimization problem related to selfish routing.

In Section~\ref{s:sparsification}, we consider latency functions that satisfy a Lipschitz condition, and present an algorithm for finding a subnetwork that is almost optimal w.r.t. the bottleneck cost of its worst Nash flow, when the worst Nash flow in the best subnetwork routes a non-negligible amount of flow on all used edges. The algorithm is based on Alth\"ofer's Sparcification Lemma \cite{Alt94}, and is motivated by its recent application to network design for additive routing games \cite{FKS09}. For any constant $\eps > 0$, the algorithm computes a subnetwork and an $\eps/2$-Nash flow whose bottleneck cost is within an additive term of $O(\eps)$ from the worst equilibrium bottleneck cost in the best subnetwork. The running time is roughly $|\P|^{\poly(\log m)/\eps^2}$, 
%
and is quasipolynomial, when the number $|\P|$ of paths is quasipolynomial.

\smallskip\noindent{\bf Other Related Work.}
Considerable attention has been paid to the approximability of the network design problem for \emph{additive routing games}, where the players seek to minimize the sum of edge latencies on their path, and the social cost is the total latency incurred by the players.
In fact, Roughgarden \cite{Rou06} first introduced the selfish network design problem in this setting, and proved that it is $\NP$-hard to recognize paradox-ridden instances. Roughgarden also proved that it is $\NP$-hard to approximate the network design problem for additive routing games within a factor less than $4/3$ for affine latencies, and less than $\floor{n/2}$ for general latencies.
For atomic unsplittable additive routing games with weighted players, Azar and Epstein \cite{AzaEps05} proved that network design is $\NP$-hard to approximate within a factor less than $2.618$, for affine latencies, and less than $d^{\Th(d)}$, for polynomial latencies of degree $d$.

On the positive side, Milchtaich \cite{Mil06} proved that non-atomic additive routing games on series-parallel networks do not suffer from Braess's paradox.
Fotakis, Kaporis, and Spirakis \cite{FKS09} proved that we can efficiently recognize paradox-ridden instances when the latency functions are affine, and all, but possibly a constant number of them, are strictly increasing. Moreover, applying Alth\"ofer's Sparsification Lemma \cite{Alt94}, they gave an algorithm that approximates network design for affine additive routing games within an additive term of $\eps$, for any constant $\eps > 0$, in time that is subexponential if the total number of $s-t$ paths is polynomial and all paths are of polylogarithmic length.

\section{Model, Definitions, and Preliminaries}
\label{s:model}

\noindent{\bf Routing Instances.}
A \emph{routing instance} is a tuple $\G = (G(V, E), (c_e)_{e \in E}, r)$, where $G(V, E)$ is a directed network with an origin $s$ and a destination $t$, $c_e : [0, r] \mapsto \reals_{\geq 0}$ is a continuous non-decreasing latency function associated with each edge $e$, and $r > 0$ is the traffic rate entering at $s$ and leaving at $t$. We let $n \equiv |V|$ and $m \equiv |E|$, and let $\P$ 
denote the set of simple $s-t$ paths in $G$.
%
%
A latency function $c_e(x)$ is \emph{linear} if $c_e(x) = a_e x$, for some $a_e > 0$, and \emph{affine} if $c_e(x) = a_e x + b_e$, for some $a_e, b_e \geq 0$.
%
%
We say that a latency function $c_e(x)$ satisfies the \emph{Lipschitz condition} with constant $\xi > 0$, if for all $x, y \in [0, r]$, $|c_e(x) - c_e(y)| \leq \xi |x - y|$.

\smallskip\noindent{\bf Subnetworks and Subinstances.}
Given a routing instance $\G = (G(V, E), (c_e)_{e \in E}, r)$, any subgraph $H(V, E')$, $E' \subseteq E$, obtained from $G$ by edge deletions, is a \emph{subnetwork} of $G$. $H$ has the same origin $s$ and destination $t$ as $G$, and the edges of $H$ have the same latency functions as in $\G$. Each instance $\H = (H(V, E'), (c_e)_{e \in E'}, r)$, where $H(V, E')$ is a subnetwork of $G(V, E)$, is a \emph{subinstance} of $\G$.

\smallskip\noindent{\bf Flows.}
A ($\G$-feasible) \emph{flow} $f$ is a non-negative vector indexed by $\P$ so that $\sum_{p \in \P} f_p = r$. For a flow $f$ and each edge $e$, we let $f_e = \sum_{p: e \in p} f_p$ denote the amount of flow that $f$ routes through $e$.
A path $p$ (resp. edge $e$) is used by flow $f$ if $f_p > 0$ (resp. $f_e > 0$). Given a flow $f$, the latency of each edge $e$ is $c_e(f_e)$, and the \emph{bottleneck cost} of each path $p$ is $b_p(f) = \max_{e \in p} c_e(f_e)$.
The \emph{bottleneck cost} of a flow $f$, denoted $B(f)$, is
$B(f) = \max_{p: f_p > 0} b_p(f)$, i.e., the maximum bottleneck cost of any used path.

\smallskip\noindent{\bf Optimal Flow.}
An \emph{optimal} flow of an instance $\G$, denoted $o$, minimizes the bottleneck cost among all $\G$-feasible flows. We let $\Bst(\G) = B(o)$. We note that for every subinstance $\H$ of $\G$, $\Bst(\H) \geq \Bst(\G)$.

\smallskip\noindent{\bf Nash Flows and their Properties.}
We consider a non-atomic model of selfish routing, where the traffic is divided among an infinite population of players, each routing a negligible amount of traffic from $s$ to $t$. A flow $f$ is at \emph{Nash equilibrium}, or simply, is a \emph{Nash flow}, if $f$ routes all traffic on paths of a locally minimum bottleneck cost. Formally, $f$ is a Nash flow if for all $s-t$ paths $p, p'$, if $f_p > 0$, then $b_p(f) \leq b_{p'}(f)$. Therefore, in a Nash flow $f$, all players incur a common bottleneck cost $B(f) = \min_{p} b_p(f)$, and for every $s-t$ path $p'$, $B(f) \leq b_p'(f)$.

We observe that if a flow $f$ is a Nash flow for an $s-t$ network $G(V, E)$, then the set of edges $e$ with $c_e(f_e) \geq B(f)$ comprises an $s-t$ cut in $G$. For the converse, if for some flow $f$, there is an $s-t$ cut consisting of edges $e$ either with $f_e > 0$ and $c_e(f_e) = B(f)$, or with $f_e = 0$ and $c_e(f_e) \geq B(f)$, then $f$ is a Nash flow. Moreover, for all bottleneck routing games with linear latencies $a_e x$, a flow $f$ is a Nash flow iff the set of edges $e$ with $c_e(f_e) = B(f)$ comprises an $s-t$ cut.
%

It can be shown that every bottleneck routing game admits at least one Nash flow (see e.g., \cite[Proposition~2]{CDR06}), and that there is an optimal flow that is also a Nash flow (see e.g., \cite[Corollary~2]{BO07}). In general, a bottleneck routing game admits many different Nash flows, each with a possibly different bottleneck cost of the players.
Given an instance $\G$, we let $B(\G)$ denote the bottleneck cost of the players in the worst Nash flow of $\G$, i.e. the Nash flow $f$ that maximizes $B(f)$ among all Nash flows. We refer to $B(\G)$ as the worst equilibrium bottleneck cost of $\G$. For convenience, for an instance $\G = (G, c, r)$, we sometimes write $B(G, r)$, instead of $B(\G)$, to denote the worst equilibrium bottleneck cost of $\G$.
We note that for every subinstance $\H$ of $\G$, $\Bst(\G) \leq B(\H)$, and that there may be subinstances $\H$ with $B(\H) < B(\G)$, which is the essence of Braess's paradox (see e.g., Fig.~\ref{fig:braess}).

The following proposition considers the effect of a uniform scaling of the latency functions. For completeness, we include the proof in the Appendix, Section~\ref{app:s:scaling}.

\begin{proposition}\label{pr:scaling}
Let $\G = (G, c, r)$ be a routing instance, let $\alpha > 0$, and let $\G' = (G, \alpha c, r)$ be the routing instance obtained from $\G$ if we replace the latency function $c_e(x)$ of each edge $e$ with $\alpha c_e(x)$. Then, any $\G$-feasible flow $f$ is also $\G'$-feasible and has $B_{\G'}(f) = \alpha B_{\G}(f)$. Moreover, a flow $f$ is a Nash flow (resp. optimal flow) of $\G$ iff $f$ is a Nash flow (resp. optimal flow) of $\G'$.
\end{proposition}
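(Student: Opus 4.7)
The plan is to observe that $\G$ and $\G'$ share the same underlying network $G$ and the same traffic rate $r$, so $\G$-feasibility of a flow depends only on the nonnegativity of $f$ and the identity $\sum_{p \in \P} f_p = r$, which are independent of the latency functions. Hence the sets of $\G$-feasible and $\G'$-feasible flows coincide, and in particular any $\G$-feasible $f$ is $\G'$-feasible.

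Next I would compute the bottleneck cost edge-by-edge. For every edge $e$, the latency in $\G'$ at load $f_e$ is $\alpha c_e(f_e)$. Since $\alpha > 0$, the maximum over any path $p$ satisfies
\[
\max_{e \in p} \alpha c_e(f_e) \;=\; \alpha \max_{e \in p} c_e(f_e),
\]
so $b_p^{\G'}(f) = \alpha\, b_p^{\G}(f)$ for every path $p$. Taking the maximum over used paths gives $B_{\G'}(f) = \alpha B_{\G}(f)$, which settles the first claim.

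For the Nash flow equivalence, I would use the definition directly: $f$ is a Nash flow of $\G$ iff for all $s$-$t$ paths $p, p'$ with $f_p > 0$ we have $b_p^{\G}(f) \leq b_{p'}^{\G}(f)$. Since $\alpha > 0$, multiplying through preserves the inequality, so this condition is equivalent to $b_p^{\G'}(f) \leq b_{p'}^{\G'}(f)$ for the same pairs of paths, which is exactly the Nash condition for $\G'$. The support of $f$ is unchanged, so the quantification over $p$ with $f_p > 0$ is the same in both instances.

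For optimality, I would apply the first part globally: if $o$ is $\G$-optimal then for every $\G'$-feasible (equivalently $\G$-feasible) flow $g$, $B_{\G'}(o) = \alpha B_\G(o) \leq \alpha B_\G(g) = B_{\G'}(g)$, so $o$ is $\G'$-optimal; the converse is identical with the roles of $\G$ and $\G'$ swapped (using scaling factor $1/\alpha$). There is no real obstacle here; the only thing to be a little careful about is ensuring that feasibility and the set of used paths are treated as purely combinatorial notions independent of the latencies, which is why the scaling factor passes through unchanged.
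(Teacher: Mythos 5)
Your argument is correct and matches the paper's proof, which is just a terser version of the same reasoning: feasibility depends only on $G$ and $r$, latencies scale by $\alpha$, hence bottleneck costs scale by $\alpha$, and the Nash and optimality conditions are preserved because $\alpha > 0$. The extra detail you give (pulling $\alpha$ through the max, checking the support is unchanged, the symmetric $1/\alpha$ argument for the converse of optimality) simply spells out what the paper leaves implicit.
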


\noindent{\bf Subpath-Optimal Nash Flows.}
For a flow $f$ and any vertex $u$, let $b_f(u)$ denote the minimum bottleneck cost of $f$ among all $s-u$ paths. The flow $f$ is a \emph{subpath-optimal Nash flow} \cite{CDR06} if for any vertex $u$ and any $s-t$ path $p$ with $f_p > 0$ that includes $u$, the bottleneck cost of the $s-u$ part of $p$ is $b_f(u)$.
For example, the Nash flow $f$ in Fig.~\ref{fig:braess}.a is not subpath-optimal, because $b_f(v) = 0$, through the edge $(s, v)$, while the bottleneck cost of the path $(s, u, v)$ is $1$. For this instance, the only subpath-optimal Nash flow is the optimal flow with $1/2$ unit on the path $(s, u, t)$ and $1/2$ unit on the path $(s, v, t)$.

\smallskip\noindent{\bf $\mathbf{\eps}$-Nash Flows.}
The definition of a Nash flow can be generalized to that of an ``almost Nash'' flow: For some constant $\eps > 0$, a flow $f$ is an $\eps$-Nash flow if for all $s-t$ paths $p$, $p'$, if $f_p > 0$, $b_p(f) \leq b_{p'}(f) + \eps$.

\smallskip\noindent{\bf Price of Anarchy.}
The \emph{Price of Anarchy} ($PoA$) of an instance $\G$, denoted $\rho(\G)$, is the ratio of the worst equilibrium bottleneck cost of $\G$ to the optimal bottleneck cost. Formally, $\rho(\G) = B(\G) / \Bst(\G)$.

\smallskip\noindent{\bf Paradox-Free and Paradox-Ridden Instances.}
A routing instance $\G$ is \emph{paradox-free} if for every subinstance $\H$ of $\G$, $B(\H) \geq B(\G)$. Paradox-free instances do not suffer from Braess's paradox and their $\PoA$ cannot be improved by edge removal.
If an instance is not paradox-free, edge removal can decrease the worst equilibrium bottleneck cost by a factor greater than $1$ and at most
$\rho(\G)$. An instance $\G$ is \emph{paradox-ridden} if there is a
subinstance $\H$ of $\G$ such that $B(\H) = \Bst(\G) = B(\G)/\rho(\G)$.
Namely, the $\PoA$ of paradox-ridden instances can decrease to $1$ by
edge removal.

\smallskip\noindent{\bf Best Subnetwork.}
Given an instance $\G = (G, c, r)$, the \emph{best subnetwork} $H^\ast$ of $G$ minimizes the worst equilibrium bottleneck cost, i.e., for all subnetworks $H$ of $G$, $B(H^\ast, r) \leq B(H, r)$.

\smallskip\noindent{\bf Problem Definitions.}
In this work, we investigate the complexity and the approximability of two fundamental selfish network design problems for bottleneck routing games:
\begin{itemize}
\item {\bf Paradox-Ridden Recognition} ($\Pone$)\,: Given an instance $\G$, decide if $\G$ is paradox-ridden.

\item {\bf Best Subnetwork} ($\Ptwo$)\,: Given an instance $\G$, find the best subnetwork $H^\ast$ of $G$.
\end{itemize}
We note that the objective function of $\Ptwo$ is the worst equilibrium bottleneck cost $B(H, r)$ of a subnetwork $H$. Thus, a (polynomial-time) algorithm $A$ achieves an $\alpha$-approximation for $\Ptwo$ if for all instances $\G$, $A$ returns a subnetwork $H$ with $B(H, r) \leq \alpha B(H^\ast, r)$.
A subtle point is that given a subnetwork $H$, we do not know how to efficiently compute the worst equilibrium bottleneck cost $B(H, r)$ (see also \cite{AzaEps05,HZ07}, where a similar issue arises). To deal with this delicate issue, our hardness results use a certain subnetwork structure to identify a good approximation to $\Ptwo$.


\smallskip\noindent{\bf Series-Parallel Networks.}
A directed $s-t$ network is \emph{series-parallel} if it either consists
of a single edge $(s, t)$ or can be obtained from two series-parallel
graphs with terminals $(s_1, t_1)$ and $(s_2, t_2)$ composed either in
series or in parallel. In a \emph{series composition}, $t_1$ is identified
with $s_2$, $s_1$ becomes $s$, and $t_2$ becomes $t$. In a \emph{parallel
composition}, $s_1$ is identified with $s_2$ and becomes $s$, and $t_1$ is
identified with $t_2$ and becomes $t$.
%

\section{Paradox-Free Network Topologies and Paradox-Free Nash Flows}
\label{s:easy-cases}

We start by discussing two interesting cases where Braess's paradox does not occur. We first show that if we have a bottleneck routing game $\G$ defined on an $s-t$ series-parallel network, then $\rho(\G) = 1$, and thus Braess's paradox does not occur. We recall that this was also pointed out in \cite{EFM09} for the case of atomic unsplittable bottleneck routing games.
Moreover, we note that a directed $s-t$ network is series-parallel iff it does not contain a $\theta$-graph with degree-2 terminals as a topological minor. Therefore, the example in Fig.~\ref{fig:braess} demonstrates that series-parallel networks is the largest class of network topologies for which Braess's paradox does not occur (see also \cite{Mil06} for a similar result for the case of additive routing games).
The proof of the following proposition is conceptually similar to the proof of \cite[Lemma~4.1]{EFM09}.
 
\begin{proposition}\label{prop:sepa}
Let $\G$ be bottleneck routing game on an $s-t$ series-parallel network. Then, $\rho(\G) = 1$.
\end{proposition}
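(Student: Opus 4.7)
The plan is to proceed by structural induction on the series-parallel decomposition of $G$, proving jointly that (i) every Nash flow $f$ of $\G$ has $B(f) = \Bst(\G)$ (which implies $B(\G) = \Bst(\G)$ and hence $\rho(\G) = 1$) and (ii) the optimal bottleneck $\Bst(G,r)$ is a non-decreasing function of the rate $r$. The base case of a single edge is immediate, and the inductive step splits into series compositions $G = G_1 \cdot G_2$ and parallel compositions $G = G_1 \| G_2$.

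For the series step, any feasible flow $f$ at rate $r$ restricts to feasible flows $f_1, f_2$ of rate $r$ in $G_1, G_2$; since every $s$-$t$ path of $G$ is a concatenation $p_1 \cdot p_2$, we get $B(f) = \max(B(f_1), B(f_2))$, and therefore $\Bst(G, r) = \max_i \Bst(G_i, r)$, which preserves (ii). Fix a Nash flow $f$ with $c = B(f)$. The Nash condition in $G$ forces $\max(b_{p_1}(f_1), b_{p_2}(f_2)) \geq c$ for every pair of an $s_1$-$t_1$ path $p_1$ and an $s_2$-$t_2$ path $p_2$, because otherwise their concatenation would be an $s$-$t$ path of bottleneck below $c$. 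In particular, we cannot simultaneously have a path of bottleneck $< c$ on both sides, so at least one side, say $G_1$, has $b_{p_1}(f_1) \geq c$ for every $s_1$-$t_1$ path. Since every used $p_1$ appears as the first half of some used $s$-$t$ path of bottleneck $c$, and therefore has $b_{p_1}(f_1) \leq c$, the restriction $f_1$ is itself a Nash flow of $G_1$ at rate $r$ with bottleneck $c$; by the inductive hypothesis, $\Bst(G_1, r) = c$, and hence $\Bst(G, r) \geq c$, matching the trivial reverse inequality $\Bst(G, r) \leq B(f) = c$.

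For the parallel step, a Nash flow $f$ splits as $r_1 + r_2 = r$ between the two sides. If both $r_i > 0$, then because $s_1 = s_2 = s$ and $t_1 = t_2 = t$, every $s_i$-$t_i$ path is an $s$-$t$ path of $G$; used paths in $G_i$ therefore have bottleneck exactly $c$ under $f_i$, unused ones have bottleneck $\geq c$, and each $f_i$ is Nash in $G_i$ at rate $r_i$. The inductive hypothesis gives $\Bst(G_i, r_i) = c$, and its monotonicity part implies that for any alternative split $(r_1', r_2')$ of $r$, the component that received more traffic has $\Bst$ value $\geq c$; minimizing over splits thus yields $\Bst(G, r) = c$. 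If instead one side is unused, say $r_2 = 0$, then $f_1$ is Nash in $G_1$ at rate $r$ with bottleneck $c$, so $\Bst(G_1, r) = c$ by induction, while Nash in $G$ forces $\max_{e \in p_2} c_e(0) \geq c$ for every $s_2$-$t_2$ path $p_2$; since any positive flow on $G_2$ uses some such $p_2$ with $f_e \geq f_{p_2} > 0$ on every edge, its bottleneck is at least $c$, and the split $(r, 0)$ is again optimal.

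The main obstacle is the series step: in contrast with the parallel case, where the restrictions of a Nash flow to the two sides are Nash flows essentially by construction, the restriction of a series Nash flow to a single side need not be Nash on that side, so one must first identify via the path-pair dichotomy above which side ``controls'' the global bottleneck before the inductive hypothesis can be applied. The monotonicity invariant (ii), carried alongside the main claim, is then exactly what makes the parallel step go through cleanly, letting us conclude that the Nash split coincides with an optimal one without any separate continuity or convexity argument.
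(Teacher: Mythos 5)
Your proof is correct and follows essentially the same induction on the series-parallel decomposition as the paper, showing that every Nash flow is bottleneck-optimal. The noteworthy difference is that you make explicit, as an auxiliary induction invariant, the monotonicity of $\Bst(G,r)$ in $r$, which the paper invokes implicitly in its parallel step; in fact this monotonicity holds for any network with non-decreasing latencies simply by scaling down an optimal flow at the larger rate, so it could be a standalone lemma (sparing you from having to verify it is preserved through the parallel composition, a check you currently omit), but either way the argument goes through, and you also explicitly treat the degenerate case where one side of a parallel composition carries zero flow, which the paper glosses over.
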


\begin{proof}
Let $f$ be any Nash flow of $\G$. We use induction on the series-parallel structure of the network $G$, and show that $f$ is an optimal flow w.r.t the bottleneck cost, i.e., that $B(f) = \Bst(\G)$. For the basis, we observe that the claim holds if $G$ consists of a single edge $(s, t)$. For the inductive step, we distinguish two cases, depending on whether $G$ is obtained by the series or the parallel composition of two series-parallel networks $G_1$ and $G_2$.

\smallskip\noindent{\bf Series Composition.}
First, we consider the case where $G$ is obtained by the series composition of an $s - t'$ series-parallel network $G_1$ and a $t' - t$ series-parallel network $G_2$. We let $f_1$ and $f_2$, both of rate $r$, be the restrictions of $f$ into $G_1$ and $G_2$, respectively.

We start with the case where $B(f) = B(f_1) = B(f_2)$. Then, either $f_1$ is a Nash flow in $G_1$, or $f_2$ is a Nash flow in $G_2$. Otherwise, there would be an $s-t'$ path $p_1$ in $G_1$ with bottleneck cost $b_{p_1}(f_1) < B(f_1)$, and an $t'-t$ path $p_2$ in $G_2$, with bottleneck cost $b_{p_2}(f_2) < B(f_2)$. Combining $p_1$ and $p_2$, we obtain an $s-t$ path $p = p_1 \union p_2$ in $G$ with bottleneck cost smaller than $B(f)$, which contradicts the hypothesis that $f$ is a Nash flow of $\G$. If $f_1$ (or $f_2)$ is a Nash flow in $G_1$ (resp. $G_2$), then by induction hypothesis $f_1$ (resp. $f_2$) is an optimal flow in $G_1$ (resp. in $G_2$), and thus $f$ is an optimal flow of $\G$.

Otherwise, we assume, without loss of generality, that $B(f) = B(f_1) < B(f_2)$. Then, $f_1$ is a Nash flow in $G_1$. Otherwise, there would be an $s-t'$ path $p_1$ in $G_1$ with bottleneck cost $b_{p_1}(f_1) < B(f_1)$, which could be combined with any $t' - t$ path $p_2$ in $G_2$, with bottleneck cost $B(f_2) < B(f_1)$, into an $s-t$ path $p = p_1 \union p_2$ with bottleneck cost smaller than $B(f)$. The existence of such a path $p$ contradicts the the hypothesis that $f$ is a Nash flow of $\G$. Therefore, by induction hypothesis $f_1$ is an optimal flow in $G_1$, and thus $f$ is an optimal flow of $\G$.

\smallskip\noindent{\bf Parallel Composition.}
Next, we consider the case where $G$ is obtained by the parallel composition of an $s - t$ series-parallel network $G_1$ and an $s - t$ series-parallel network $G_2$. We let $f_1$ and $f_2$ be the restriction of $f$ into $G_1$ and $G_2$, respectively, let $r_1$ (resp. $r_2$) be the rate of $f_1$ (resp. $f_2$), and let $\G_1$ (resp. $\G_2$) be the corresponding routing instance. Then, since $f$ is a Nash flow of $\G$, $f_1$ and $f_2$ are Nash flows of $\G_1$ and $\G_2$ respectively, and $B(f_1) = B(f_2) = B(f)$. Therefore, by the induction hypothesis, $f_1$ and $f_2$ are optimal flows of $\G_1$ and $\G_2$, and $f$ is an optimal flow of $\G$. To see this, we observe that any flow different from $f$ must route more flow through either $G_1$ or $G_2$. But if the flow through e.g. $G_1$ is more than $r_1$, the bottleneck cost through $G_1$ would be at least as large as $B(f_1)$. \end{proof}\qed

Next, we show that any subpath-optimal Nash flow achieves a minimum bottleneck cost, and thus Braess's paradox does not occur if we restrict ourselves to subpath-optimal Nash flows.

\begin{proposition}\label{prop:subpathopt}
Let $\G$ be bottleneck routing game, and let $f$ be any subpath-optimal Nash flow of $\G$. Then, $B(f) = \Bst(\G)$.
\end{proposition}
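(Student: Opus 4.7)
The plan is to argue by contradiction: assume $B(f) > \Bst(\G)$, fix an optimal flow $o$ (so $B(o) = \Bst(\G) < B(f)$), and exhibit an $s$-$t$ cut whose structure forces $B(o) \geq B(f)$, yielding a contradiction.

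First, I would identify the right cut from the subpath-optimality of $f$. Define $S = \{v \in V : b_f(v) < B(f)\}$ and let $F = \{(u, v) \in E : u \in S,\ v \notin S\}$. Since $b_f(s) = 0$ and, by subpath-optimality applied to any used $s$-$t$ path, $b_f(t) = B(f)$, we have $s \in S$ and $t \notin S$, so $F$ is an $s$-$t$ cut. For any $e = (u, v) \in F$, taking the $s$-$v$ path through $u$ and $e$ yields $b_f(v) \leq \max(b_f(u), c_e(f_e))$; since $b_f(u) < B(f) \leq b_f(v)$, this forces $c_e(f_e) \geq B(f)$. If additionally $f_e > 0$, then $e$ lies on a used path $p$ with $c_e(f_e) \leq b_p(f) \leq B(f)$, so $c_e(f_e) = B(f)$ exactly.

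Next, I would use subpath-optimality to rule out backward flow across the cut. Suppose some backward cut edge $e = (u, v)$ (with $u \notin S$, $v \in S$) satisfied $f_e > 0$, so it lies on some used path $p$. Subpath-optimality at $u$ forces the $s$-$u$ part of $p$ to have bottleneck exactly $b_f(u) \geq B(f)$; concatenating $e$, the $s$-$v$ part of $p$ has bottleneck $\geq B(f)$, but subpath-optimality at $v$ forces this same bottleneck to equal $b_f(v) < B(f)$, a contradiction. Hence no backward cut edge carries $f$-flow, and conservation across the cut gives $\sum_{e \in F} f_e = r$, while for $o$ we still have $\sum_{e \in F} o_e \geq r$.

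Finally, I would compare $o$ and $f$ on $F$. A short counting argument from $\sum_{e \in F}(o_e - f_e) \geq 0$ and $\sum_{e \in F} f_e = r > 0$ produces an edge $e^\ast \in F$ with $o_{e^\ast} \geq f_{e^\ast}$ \emph{and} $o_{e^\ast} > 0$ (otherwise all edges with $o_e \geq f_e$ would have $o_e = f_e = 0$, making $\sum_F o_e < \sum_F f_e = r$). By monotonicity of $c_{e^\ast}$, $c_{e^\ast}(o_{e^\ast}) \geq c_{e^\ast}(f_{e^\ast}) \geq B(f)$, using the first-step bound whether $f_{e^\ast}$ is zero or positive. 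Since $o_{e^\ast} > 0$, the edge $e^\ast$ lies on some $o$-used path, so $B(o) \geq c_{e^\ast}(o_{e^\ast}) \geq B(f)$, the desired contradiction. The most delicate point is the backward-edge step, where subpath-optimality (strictly stronger than the Nash property) is essential; without it, $f$-flow could sneak around the cut and the final counting argument would break.
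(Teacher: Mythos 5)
Your proof is correct and follows essentially the same route as the paper: you define the same vertex set $S$ and cut $\delta^+(S)$, re-derive (rather than cite, as the paper does via Lemma~4.5 of Cole--Dodis--Roughgarden) the four cut properties, and conclude via the same flow-conservation comparison between $o$ and $f$ across the cut. Your final counting step is in fact slightly more careful than the paper's, since you explicitly extract a witness edge $e^\ast$ with both $o_{e^\ast}\geq f_{e^\ast}$ \emph{and} $o_{e^\ast}>0$, which is needed to legitimately conclude $B(o)\geq c_{e^\ast}(o_{e^\ast})$.
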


\begin{proof}
Let $f$ be any subpath-optimal Nash flow of $\G$, let $S$ be the set of vertices reachable from $s$ via edges with bottleneck cost less than $B(f)$, let $\delta^+(S)$ be the set of edges $e = (u, v)$ with $u \in S$ and $v \not\in S$, and let $\delta^-(S)$ be the set of edges $e = (u, v)$, with $u \not\in S$ and $v \in S$. Then, in \cite[Lemma~4.5]{CDR06}, it is shown that (i) $(S, V \setminus S)$ is an $s-t$ cut, (ii) for all edges $e \in \delta^+(S)$, $c_e(f_e) \geq B(f)$, (iii) for all edges $e \in \delta^+(S)$ with $f_e > 0$, $c_e(f_e) = B(f)$, and (iv) for all edges $e \in \delta^-(S)$, $f_e = 0$.

By (i) and (iv), any optimal flow $o$ routes at least as much traffic as the subpath-optimal Nash flow $f$ routes through the edges in $\delta^+(S)$. Thus, there is some edge $e \in \delta^+(S)$ with $o_e \geq f_e$, which implies that $c_e(o_e) \geq c_e(f_e) \geq B(f)$, where the second inequality follows from (ii). Since $\Bst(\G) = B(o) \geq c_e(o_e)$, we obtain that $\Bst(\G) = B(f)$.
\qed\end{proof}

\section{Recognizing Paradox-Ridden Instances is Hard}
\label{s:np-hard}

In this section, we show that given a linear bottleneck routing game $\G$, it is $\NP$-hard not only to decide whether $\G$ is paradox-ridden, but also to approximate the best subnetwork within a factor less than $4/3$.
To this end, we employ a reduction from the $2$-Directed Disjoint Paths problem ($\DD$), where we are given a directed network $D$ and distinguished vertices $s_1, s_2, t_1, t_2$, and ask whether $D$ contains a pair of vertex-disjoint paths connecting $s_1$ to $t_1$ and $s_2$ to $t_2$. $\DD$ was shown $\NP$-complete in \cite[Theorem~3]{FHW80}, even if the network $D$ is known to contain two edge-disjoint paths connecting $s_1$ to $t_2$ and $s_2$ to $t_1$.
In the following, we say that a subnetwork $D'$ of $D$ is \emph{good} if $D'$ contains (i) at least one path outgoing from each of $s_1$ and $s_2$ to either $t_1$ or $t_2$, (ii) at least one path incoming to each of $t_1$ and $t_2$ from either $s_1$ or $s_2$, and (iii) either no $s_1 - t_2$ paths or no $s_2 - t_1$ paths. We say that $D'$ is \emph{bad} if any of these conditions is violated by $D'$. We note that we can efficiently check whether a subnetwork $D'$ of $D$ is good, and that a good subnetwork $D'$ serves as a certificate that $D$ is a $\YES$-instance of $\DD$. Then, the following lemma directly implies the hardness result of this section.

\begin{lemma}\label{l:gap}
Let $\I = (D, s_1, s_2, t_1, t_2)$ be any $\DD$ instance. Then, we can construct, in polynomial time, an $s-t$ network $G(V, E)$ with a linear latency function $c_e(x) = a_e x$, $a_e > 0$, on each edge $e$, so that for any traffic rate $r > 0$, the bottleneck routing game $\G = (G, c, r)$ has $\Bst(\G) = r/4$, and:
\begin{enumerate}
\item If $\I$ is a $\YES$-instance of $\DD$, there exists a subnetwork $H$ of $G$ with $B(H, r) = r/4$.

\item If $\I$ is a $\NO$-instance of $\DD$, for all subnetworks $H'$ of $G$, $B(H', r) \geq r/3$.

\item For all subnetworks $H'$ of $G$, either $H'$ contains a good subnetwork of $D$, or $B(H', r) \geq r/3$.
\end{enumerate}

\end{lemma}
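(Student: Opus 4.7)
The plan is to reduce from $\DD$ by constructing a linear bottleneck routing game $\G = (G, c, r)$ whose best subnetwork cost pinpoints the status of $\I$. I would augment $D$ with a new source $s$, a new sink $t$, and four outer edges $(s, s_1), (s, s_2), (t_1, t), (t_2, t)$, assigning each outer edge a linear coefficient so that flow conservation at $s$ forces $\max(f_{(s,s_1)}, f_{(s,s_2)}) \geq r/2$ and hence $\Bst(\G) \geq r/4$, and assigning small coefficients to the edges of $D$ so that internal edges never dominate the bottleneck. The matching upper bound $\Bst(\G) \leq r/4$ follows by routing $r/2$ through each of the edge-disjoint $s_1 \to t_2$ and $s_2 \to t_1$ paths that the hardness premise of $\DD$ guarantees in $D$, so that every outer edge carries latency exactly $r/4$ in the resulting flow.

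For Part~1 (the YES-case), I use the vertex-disjoint paths $P_1 \subseteq D$ from $s_1$ to $t_1$ and $P_2 \subseteq D$ from $s_2$ to $t_2$ supplied by the YES-answer to define $H$ as the four outer edges together with $P_1 \cup P_2$. The only simple $s$-$t$ paths in $H$ are the concatenations $s \to P_1 \to t$ and $s \to P_2 \to t$; the unique Nash flow balances $r/2$ on each and attains bottleneck $r/4 = \Bst(\G)$. For Part~3 (which directly yields Part~2 since a NO-instance has no good subnetwork in $D$), I case-split on how $H' \cap D$ fails to be good. If condition (i) or (ii) is violated, some $s_i$ or $t_j$ is disconnected within $H' \cap D$ and the entire rate $r$ must pass through a single outer edge, so the bottleneck is $\geq r/2 \geq r/3$ in every feasible flow and hence in every Nash flow. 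If only (iii) is violated, both an $s_1 \to t_2$ path and an $s_2 \to t_1$ path survive in $H' \cap D$ but no vertex-disjoint straight paths do; here I would exhibit an explicit Nash flow of bottleneck $\geq r/3$ by loading at least $2r/3$ units on one of the outer edges and verifying the Nash inequalities against every alternative $s$-$t$ path in $H'$, using the structural absence of a good subnetwork to rule out cheaper equilibrium splits. Nash existence is invoked via \cite[Proposition~2]{CDR06}.

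The main obstacle will be the (iii)-failure case, because the naive ``balanced-on-crossings'' configuration is itself a Nash flow of bottleneck only $r/4$; the lemma, however, bounds the \emph{worst} Nash flow from below by $r/3$, so it suffices to certify the existence of an additional, \emph{skewed} Nash of the claimed cost. The precise coefficient tuning on the outer and internal edges is what ensures that such a skewed Nash remains locally stable as long as $H' \cap D$ lacks a good subnetwork, and the verification reduces to a finite set of subcases parameterized by which of the four $s_i \to t_j$ path-types survive in $H' \cap D$.
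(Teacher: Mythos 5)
Your construction is fundamentally simpler than the paper's, and this simplicity is fatal: it cannot produce a worst Nash flow of bottleneck $r/3$ in the NO-case. In your network, the only edges incident to $s$ are $(s,s_1)$ and $(s,s_2)$, and the only edges into $t$ are $(t_1,t)$ and $(t_2,t)$; internal $D$-edges have arbitrarily small latency. Consequently, any $s$-$t$ cut consisting of edges whose latency in a flow $f$ is at least $B(f)$ must be drawn from these four outer edges. Consider the subcase you flag as ``the main obstacle'': only the crossing $s_1\to t_2$ and $s_2\to t_1$ paths survive. Then the two-edge cuts are $\{(s,s_1),(s,s_2)\}$, $\{(t_1,t),(t_2,t)\}$, and $\{(s,s_1),(t_1,t)\}$, $\{(s,s_2),(t_2,t)\}$. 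Flow conservation forces $f_{(s,s_1)}+f_{(s,s_2)}=r$ and (since all $s_1$-flow goes to $t_2$ and all $s_2$-flow to $t_1$) $f_{(s,s_1)}=f_{(t_2,t)}$, $f_{(s,s_2)}=f_{(t_1,t)}$. With coefficient $1/2$ on outer edges, no assignment makes \emph{both} edges of any of these cuts reach latency $\geq r/3$ simultaneously; e.g., pushing $2r/3$ on $(s,s_1)$ forces $f_{(t_1,t)}\leq r/3$, latency $\leq r/6$, so $\{(s,s_1),(t_1,t)\}$ fails to be a high-latency cut, and a player on the overloaded path profitably deviates to $s\to s_2\to t_1\to t$. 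So the ``skewed'' flow you propose is simply not a Nash flow; in fact the unique Nash on the full graph in the NO-case still has bottleneck $r/4$, so property~(2) of the lemma fails outright. Tuning coefficients cannot rescue this, because the obstruction is which edge sets form $s$-$t$ cuts, and that is determined by the topology, not the coefficients.

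The paper avoids this by adding two extra internal vertices $u,v$, a central edge $e_2=(u,v)$, the cross edges $e_5=(v,s_1)$, $e_7=(t_1,u)$, and the parallel edges $e_4=(s,v)$, $e_8=(u,t)$, in addition to your $(s,s_2)$, $(t_2,t)$ and the heavy ``backbone'' edges $e_1=(s,u)$, $e_3=(v,t)$ with latency $x/2$. This richer topology is what creates the Braess structure: when crossing paths survive in $D$, the three routes $(e_1,e_2,e_3)$, $(e_1,e_2,e_5,p,e_9)$, $(e_6,q,e_7,e_2,e_3)$ each carry $r/3$, yielding latency $r/3$ on $e_1,e_3,e_5,e_6,e_7,e_9$, which \emph{do} form an $s$-$t$ cut (the parallel routes through $e_4$ or $e_8$ are blocked by $e_3$ or $e_1$ respectively), so this is a genuine worst Nash of cost $r/3$. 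When only the disjoint straight paths are kept (YES-case), no such cut can be saturated with rate $r$, and the worst Nash drops to $r/4$. You would need to replace your four-outer-edge gadget with something of comparable expressive power before the case analysis you sketch for part~(3) can go through.
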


\begin{proof}
We construct a network $G(V, E)$ with the desired properties by adding $4$ vertices, $s$, $t$, $v$, $u$, to $D$ and $9$ ``external'' edges $e_1 = (s, u)$, $e_2 = (u, v)$, $e_3 = (v, t)$, $e_4 = (s, v)$, $e_5 = (v, s_1)$, $e_6 = (s, s_2)$, $e_7 = (t_1, u)$, $e_8 = (u, t)$, $e_9 = (t_2, t)$ (see also Fig.~\ref{fig:redgraphgood}.a). The external edges $e_1$ and $e_3$ have latency $c_{e_1}(x) = c_{e_3}(x) = x/2$. The external edges $e_4, \ldots, e_9$ have latency $c_{e_i} = x$. The external edge $e_2$ and each edge $e$ of $D$ have latency $c_{e_2}(x) = c_{e}(x) = \eps x$, for some $\eps \in (0, 1/4)$.

\fig{0.72\textwidth}{figures/redgraphgood}%
{\label{fig:redgraphgood}(a) The network $G$ constructed in the proof of Lemma~\ref{l:gap}. (b) The best subnetwork of $G$, with $\PoA = 1$, for the case where $D$ contains a pair of vertex-disjoint paths connecting $s_1$ to $t_1$ and $s_2$ to $t_2$.}

We first show that $\Bst(\G) = r/4$. As for the lower bound, since the edges $e_1$, $e_4$, and $e_6$ form an $s-t$ cut in $G$, every $\G$-feasible flow has a bottleneck cost of at least $r/4$. As for the upper bound, we may assume that $D$ contains an $s_1 - t_2$ path $p$ and an $s_2 - t_1$ path $q$, which are edge-disjoint (see also \cite[Theorem~3]{FHW80}).  Then, we route a flow of $r/4$ through each of the paths $(e_4, e_5, p, e_9)$ and $(e_6, q, e_7, e_8)$, and a flow of $r/2$ through the path $(e_1, e_2, e_3)$, which gives a bottleneck cost of $r/4$.

Next, we show (1), namely that if $\I$ is a $\YES$-instance of $\DD$, then 
there exists a subnetwork $H$ of $G$ 
with $B(H, r) = r/4$. By hypothesis, there is a pair of vertex-disjoint paths in $D$, $p$ and $q$, connecting $s_1$ to $t_1$, and $s_2$ to $t_2$. Let $H$ be the subnetwork of $G$ that includes all external edges and only the edges of $p$ and $q$ from $D$ (see also Fig.~\ref{fig:redgraphgood}.b). We let $\H = (H, c, r)$ be the corresponding subinstance of $\G$.
The flow routing $r/4$ units through each of the paths $(e_4, e_5, p, e_7, e_8)$ and $(e_6, q, e_9)$, and $r/2$ units through the path $(e_1, e_2, e_3)$, is an $\H$-feasible Nash flow with a bottleneck cost of $r/4$. 

We proceed to show that any Nash flow of $\H$ achieves a bottleneck cost of $r/4$. For sake of contradiction, let $f$ be a Nash flow of $\H$ with $B(f) > r/4$. Since $f$ is a Nash flow, the edges $e$ with $c_e(f_e) \geq B(f)$ form an $s-t$ cut in $H$. Since the bottleneck cost of $e_2$ and of any edge in $p$ and $q$ is at most $r/4$, this cut includes either $e_6$ or $e_9$ (or both), either $e_1$ or $e_3$ (or both), and either $e_4$ or $e_8$ (or $e_5$ or $e_6$, in certain combinations with other edges). Let us consider the case where this cut includes $e_1$, $e_4$, and $e_6$. Since the bottleneck cost of these edges is greater than $r/4$, we have more than $r/2$ units of flow through $e_1$ and more than $r/4$ units of flow through each of $e_4$ and $e_6$. Hence, we obtain that more than $r$ units of flow leave $s$, a contradiction. All other cases are similar.

To conclude the proof, we have also to show (3), namely that for any subnetwork $H'$ of $G$, if $H'$ does not contain a good subnetwork of $D$, then $B(H', r) \geq r/3$. We observe that (3) implies (2), because if $\I$ is a $\NO$-instance, any two paths, $p$ and $q$, connecting $s_1$ to $t_1$ and $s_2$ to $t_2$, have some vertex in common, and thus, $D$ includes no good subnetworks.
To show (3), we let $H'$ be any subnetwork of $G$, and let $\H'$ be the corresponding subinstance of $\G$. We first show that either $H'$ contains (i) all external edges, (ii) at least one path outgoing from each of $s_1$ and $s_2$ to either $t_1$ or $t_2$, and (iii) at least one path incoming to each of $t_1$ and $t_2$ from either $s_1$ or $s_2$, or $H'$ includes a ``small'' $s-t$ cut, and thus any $\H'$-feasible flow $f$ has $B(f) \geq r/3$.

To prove (i), we observe that if some of the edges $e_1$, $e_4$, and $e_6$ is missing from $H'$, $r$ units of flow are routed through the remaining ones, which results in a bottleneck cost of at least $r/3$. The same argument applies to the edges $e_3$, $e_8$, and $e_9$. Similarly, if $e_2$ is not present in $H'$, the edges $e_4$, $e_6$, and $e_8$ form an $s-t$ cut, and routing $r$ units of flow through them causes a bottleneck cost of at least $r/3$. Therefore, we can assume, without loss of generality, that all these external edges are present in $H'$.

Now, let us focus on the external edges $e_5$ and $e_7$. If $e_5$ is not present in $H'$ and there is a path $p$ outgoing from $s_2$ to either $t_1$ or $t_2$, routing $2r/3$ units of flow through the path $(e_1, e_2, e_3)$ and $r/3$ units through the path $(e_6, p, e_9)$ (or through the path $(e_6, p, e_7, e_8)$) is a Nash flow with a bottleneck cost of $r/3$ (see also Fig.~\ref{fig:redgraphbad}.a). If $s_2$ is connected to neither $t_1$ nor $t_2$ (no matter whether $e_5$ is present in $H'$ or not), the edges $e_1$ and $e_4$ form an $s-t$ cut, and thus, any $\H'$-feasible flow has a bottleneck cost of at least $r/3$. Similarly, we can show that if either $e_7$ is not present in $H'$, or neither $s_1$ nor $s_2$ is connected to $t_2$, any $\H'$-feasible flow has a bottleneck cost of at least $r/3$. Therefore, we can assume, without loss of generality, that all external edges are present in $H'$, and that $H'$ includes at least one path outgoing from $s_2$ to either $t_1$ or $t_2$, and at least one path incoming to $t_2$ from either $s_1$ or $s_2$.

Similarly, we can assume, without loss of generality, that $H'$ includes at least one path outgoing from $s_1$ to either $t_1$ or $t_2$, and at least one path incoming to $t_1$ from either $s_1$ or $s_2$. E.g., if $s_1$ is connected to neither $t_1$ nor $t_2$, routing $2r/3$ units of flow through the path $(e_1, e_2, e_3)$ and $r/3$ units through $s_2$ and either $t_1$ or $t_2$ (or both) is a Nash flow with a bottleneck cost of $r/3$. A similar argument applies to the case where neither $s_1$ nor $s_2$ is connected to $t_1$.

\fig{1.1\textwidth}{figures/redgraphbad}%
{\label{fig:redgraphbad}Possible subnetworks of $G$ when there is no pair of vertex-disjoint paths connecting $s_1$ to $t_1$ and $s_2$ to $t_2$. The subnetwork (a) contains an $s_2 - t_2$ path and does not include $e_5$. In the subnetwork (b), we essentially have all edges of $G$. In (c), we depict a Nash flow that consists of three paths, each carrying $r/3$ units of flow, and has a bottleneck cost of $r/3$.}

Let us now consider a subnetwork $H'$ of $G$ that does not contain a good subnetwork of $D$, but it contains (i) all external edges, (ii) at least one path outgoing from each of $s_1$ and $s_2$ to either $t_1$ or $t_2$, and (iii) at least one path incoming to each of $t_1$ and $t_2$ from either $s_1$ or $s_2$. By (ii) and (iii), and the hypothesis that the subnetwork of $D$ included in $H'$ is bad, $H'$ contains an $s_1 - t_2$ path $p$ and an $s_2 - t_1$ path $q$ (see also Fig.~\ref{fig:redgraphbad}.b).
At the intuitive level, this corresponds to the case where no edges are removed from $G$. Then, routing $r/3$ units of flow on each of the $s-t$ paths $(e_1, e_2, e_3)$, $(e_1, e_2, e_5, p, e_9)$, and $(e_6, q, e_7, e_2, e_3)$ has a bottleneck cost of $r/3$ and is a Nash flow, because the set of edges with bottleneck cost $r/3$ comprises an $s-t$ cut (see also Fig.~\ref{fig:redgraphbad}.c). Therefore, we have shown part (3) of the lemma, which in turn, immediately implies part (2). %
 \qed\end{proof}

We note that 
the bottleneck routing game $\G$ in the proof of Lemma~\ref{l:gap} has $\rho(\G) = 4/3$, and is paradox-ridden, if $\I$ is a $\YES$ instance of $\DD$, and paradox-free, otherwise. Thus, we obtain that: 

\begin{theorem}\label{th:np-hard}
Deciding whether a bottleneck routing game with strictly increasing linear latencies is paradox-ridden is $\NP$-hard.
\end{theorem}

Moreover, Lemma~\ref{l:gap} implies that it is $\NP$-hard to approximate $\Ptwo$ within a factor less than $4/3$. The subtle point here is that given a subnetwork $H$, we do not know how to efficiently compute the worst equilibrium bottleneck cost $B(H, r)$. However, we can use the notion of a good subnetwork of $D$ and deal with this issue. Specifically, let $A$ be any approximation algorithm for $\Ptwo$ with approximation ratio less than $4/3$. Then, if $D$ is a $\YES$-instance of $\DD$, $A$ applied to the network $G$, constructed in the proof of Lemma~\ref{l:gap}, returns a subnetwork $H$ with $B(H, r) < r/3$. Thus, by Lemma~\ref{l:gap}, $H$ contains a good subnetwork of $D$, which can be checked in polynomial time. If $D$ is a $\NO$-instance, $D$ contains no good subnetworks. Hence, the outcome of $A$ would allow us to distinguish between $\YES$ and $\NO$ instances of $\DD$.

\section{Approximating the Best Subnetwork is Hard}
\label{s:inapprox}

Next, we apply essentially the same construction as in the proof of Lemma~\ref{l:gap}, but in a recursive way, and show that it is $\NP$-hard to approximate $\Ptwo$ for linear bottleneck routing games within a factor of $O(n^{.121-\eps})$, for any constant $\eps > 0$.
Throughout this section, we let $\I = (D, s_1, s_2, t_1, t_2)$ be a $\DD$ instance, and let $G$ be an $s-t$ network, which includes (possibly many copies of) $D$ and can be constructed from $\I$ in polynomial time. We assume that $G$ has a linear latency function $c_e(x) = a_e x$, $a_e > 0$, on each edge $e$, and for any traffic rate $r > 0$, the bottleneck routing game $\G = (G, c, r)$ has $\Bst(\G) = r/\gamma_1$, for some $\gamma_1 > 0$. Moreover,
\begin{enumerate}
\item If $\I$ is a $\YES$-instance of $\DD$, there exists a subnetwork $H$ of $G$ with $B(H, r) = r/\gamma_1$.

\item If $\I$ is a $\NO$-instance of $\DD$, for all subnetworks $H'$ of $G$, $B(H', r) \geq r/\gamma_2$, for a $\gamma_2 \in (0, \gamma_1)$.

\item For all subnetworks $H'$ of $G$, either $H'$ contains at least one copy of a good subnetwork of $D$, or $B(H', r) \geq r/\gamma_2$.
\end{enumerate}
The existence of such a network shows that it is $\NP$-hard to approximate $\Ptwo$ within a factor less than $\gamma = \gamma_1/\gamma_2$. Thus, we usually refer to $G$ as a $\gamma$-gap instance (with linear latencies). For example, for the network $G$ in the proof of Lemma~\ref{l:gap}, $\gamma_1 = 4$ and $\gamma_2 = 3$, and thus $G$ is a $4/3$-gap instance. We next show that given $\I$ and a $\gamma_1/\gamma_2$-gap instance $G$, we can construct a $4\gamma_1/(3\gamma_2)$-gap instance $G'$, i.e., we can amplify the inapproximability gap by a factor of $4/3$.

\begin{lemma}\label{l:rec-gap}
Let $\I = (D, s_1, s_2, t_1, t_2)$ be a $\DD$ instance, and let $G$ be a $\gamma_1/\gamma_2$-gap instance with linear latencies, based on $\I$. Then, we can construct, in time polynomial in the size of $\I$ and $G$, an $s-t$ network $G'$ with a linear latency function $c_e(x) = a_e x$, $a_e > 0$, on each edge $e$, so that for any traffic rate $r > 0$, the bottleneck routing game $\G' = (G', c, r)$ has $\Bst(\G) = r/(4\gamma_1)$, and:
\begin{enumerate}
\item If $\I$ is a $\YES$-instance of $\DD$, there exists a subnetwork $H$ of $G'$ with $B(H, r) = r/(4\gamma_1)$.

\item If $\I$ is a $\NO$-instance of $\DD$, for every subnetwork $H'$ of $G'$, $B(H', r) \geq r/(3\gamma_2)$.

\item For all subnetworks $H'$ of $G'$, either $H'$ contains at least one copy of a good subnetwork of $D$, or $B(H', r) \geq r/(3\gamma_2)$.
\end{enumerate}
\end{lemma}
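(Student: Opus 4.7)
I construct $G'$ from the Lemma~\ref{l:gap} gadget by replacing each of the eight external edges $e_1,e_3,e_4,e_5,e_6,e_7,e_8,e_9$ with a copy of the given $\gamma_1/\gamma_2$-gap instance $G$: at the positions of $e_1,e_3$ the copy has every latency scaled by $1/2$ (in the sense of Proposition~\ref{pr:scaling}), while the copies at $e_4,\ldots,e_9$ are unscaled. The endpoints of each inner copy are identified with the endpoints of the edge it replaces, and $e_2$ (latency $\eps x$), together with the vertices $s,t,u,v$, are kept as in Lemma~\ref{l:gap}. The scaling factors are chosen so that two rate patterns inherited from Lemma~\ref{l:gap} behave as desired. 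In the YES-style flow ($r/2$ through the middle path and $r/4$ through each side path) every inner copy sees optimal bottleneck exactly $r/(4\gamma_1)$, since $(1/2)(r/2)/\gamma_1=(r/4)/\gamma_1=r/(4\gamma_1)$ by linearity and $\Bst(G,\tilde r)=\tilde r/\gamma_1$. In the NO-style three-path flow of Lemma~\ref{l:gap} (three $s-t$ paths each carrying $r/3$), the copies at $e_1,e_3$ carry $2r/3$ and, using property~(2) of $G$, contribute bottleneck at least $(1/2)(2r/3)/\gamma_2=r/(3\gamma_2)$, while the used copies among $e_5,e_6,e_7,e_9$ carry $r/3$ and contribute bottleneck at least $r/(3\gamma_2)$.

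\textbf{Carrying out the proof.} First I verify $\Bst(\G')=r/(4\gamma_1)$: the upper bound sends $r/2$ through the middle and $r/4$ through each side path, using the edge-disjoint $s_1-t_2$ and $s_2-t_1$ paths in the outer $D$ guaranteed by the assumption on $\DD$ and, inside each inner copy, an optimal flow of $G$ at the relevant rate; the matching lower bound uses the $s-t$ cut through the inner copies at $e_1,e_4,e_6$, each contributing $\alpha_i\tilde r/\gamma_1=r/(4\gamma_1)$ to the optimum. For the YES case, I insert the subnetwork $H\subseteq G$ provided by property~(1) inside every inner copy and keep only a pair of vertex-disjoint $s_1-t_1,\,s_2-t_2$ paths in the outer $D$. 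The natural three-route flow is Nash and has bottleneck $r/(4\gamma_1)$; any Nash flow of strictly larger bottleneck is ruled out by the same cut-plus-conservation argument that handled the base case in Lemma~\ref{l:gap}, now applied with inner copies acting as super-edges whose bottleneck is $\alpha_i\tilde r/\gamma_1$ at input rate $\tilde r$. For the NO side (which yields both~(2) and~(3)), fix any subnetwork $H'$ of $G'$ that contains no copy of a good $D$-subnetwork, and say that an inner copy is \emph{alive} if it retains an internal $s-t$ path. If some inner copy is dead, the same cut considerations as in Lemma~\ref{l:gap} force a surviving copy to carry rate at least $r/3$, and property~(3) of $G$ plus Proposition~\ref{pr:scaling} imply that this copy's worst Nash bottleneck is at least $r/(3\gamma_2)$. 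If every inner copy is alive but no good $D$-subnetwork is present anywhere in $H'$, then the outer $D$-subnetwork in $H'$ retains both an $s_1-t_2$ and an $s_2-t_1$ path (else it would be good), and the three-path Nash flow of Lemma~\ref{l:gap}, with each inner copy placed in its worst Nash state, transposes to a Nash flow of $H'$ of bottleneck at least $r/(3\gamma_2)$ by the scaling calculation above.

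\textbf{Main obstacle.} The delicate point is gluing the worst-Nash flows of the individual inner copies into a single global Nash flow of $H'$: the outer Nash condition requires a common locally-minimum bottleneck on every used $s-t$ path, while each inner copy only exposes a single bottleneck value to the outside. The choice of scaling factors ($1/2$ at $e_1,e_3$ and $1$ elsewhere) is precisely what makes the inner-copy bottlenecks coincide along all three used paths in both the YES and the NO configurations, so that the outer routing remains Nash after dropping in the inner copies. Verifying that no alternate $s-t$ path in $H'$ can undercut the common bottleneck uses the observation that every such path must traverse some cut of inner copies whose total contribution still meets the threshold, an argument that parallels Lemma~\ref{l:gap} one level up and is what licenses the clean recursive $4/3$ amplification claimed by the lemma.
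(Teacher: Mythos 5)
Your construction is the same as the paper's, and your treatment of $\Bst(\G')$ and of the YES case is essentially correct. But the argument for part~(3) (and hence~(2)) has a genuine gap exactly at the point you flag as the ``main obstacle,'' and your proposed resolution of that obstacle does not work.

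You claim that the scaling factors ($1/2$ at positions $e_1,e_3$ and $1$ elsewhere) ``make the inner-copy bottlenecks coincide along all three used paths,'' so that the three-route Nash flow of Lemma~\ref{l:gap} can be transplanted with each edgework ``placed in its worst Nash state.'' This is false. The subedgeworks $H'_1,\ldots,H'_9$ inside $H'$ are \emph{arbitrary, possibly different} subnetworks of $G$, so the worst-equilibrium bottleneck of $H'_i$ at unit rate, $B(H'_i,1)$, can differ from copy to copy — you only know it is at least $1/\gamma_2$ (or $1/(2\gamma_2)$ for the scaled copies). If you route $2r/3$ through $H'_1$ and $r/3$ through $H'_5$ and put each in its own worst Nash state, the resulting per-copy bottlenecks $\tfrac{2r}{3}B(H'_1,1)$ and $\tfrac{r}{3}B(H'_5,1)$ need not be equal, and then the glued flow is not a Nash flow of $H'$: used paths through different copies have different bottleneck costs. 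The scaling only equalizes \emph{optimal} bottlenecks under the proportional routing, not worst-equilibrium bottlenecks of arbitrary subedgeworks.

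The paper's fix, which you are missing, is to exploit linearity to make $B(H'_i,\rho)$ \emph{linear in $\rho$} and then to choose a \emph{saturation rate} $r^s_i$ for each subedgework — the rate at which its worst Nash flow has bottleneck exactly $r/(3\gamma_2)$. Property~(3) of the gap instance guarantees $r^s_i\le r/3$ (or $\le 2r/3$ for the scaled copies), so these rates fit inside a total of $r$. The proof then routes flow so that a collection of subedgeworks, each at its saturation rate (or beyond), forms an $s$--$t$ cut; since every one of them sits at the common bottleneck $r/(3\gamma_2)$, the cut certifies a Nash flow. Propositions~\ref{pr:gap-structure} and~\ref{pr:gap-routing} do this via a careful case analysis on which of $H'_5,H'_6,H'_7,H'_9$ saturates first, and also treat the separate wrinkle that the ``combined'' route through $H'_6,q,H'_7,e_2,H'_5,p,H'_9$ may be non-simple (since $p$ and $q$ intersect) and must be decomposed into simple $s$--$t$ paths. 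Your proposal addresses neither the rate-balancing nor the non-simplicity issue, so as written the NO-side bound does not follow.
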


\begin{proof}
Starting from $D$, we obtain $G'$ by applying the construction of Lemma~\ref{l:gap}, but with all external edges, except for $e_2$, replaced by a copy of the gap-instance $G$. For convenience, we refer to the copy of the gap-instance replacing the external edge $e_i$, $i \in \{1, 3, \ldots, 9\}$, as the \emph{edgework} $G_i$.
Formally, to obtain $G'$, we start from $D$ and add four new vertices, $s$, $t$, $v$, $u$. We connect $s$ to $u$, with the $s-u$ edgework $G_1$, and $v$ to $t$, with the $s-u$ edgework $G_3$, where in both $G_1$ and $G_3$, we replace the latency function $c_e(x)$ of each edge $e$ in the gap instance with $c_e(x)/2$ (this is because in Lemma~\ref{l:gap}, the external edges $e_1$ and $e_3$ have latencies $x/2$). Moreover, instead of the external edge $e_i$, $i \in \{ 4, \ldots, 9\}$, we connect $(s, v)$, $(v, s_1)$, $(s, s_2)$, $(t_1, u)$, $(u, t)$, and $(t_2, t)$ with the edgework $G_i$. The latencies in these edgeworks are as in the gap instance. Furthermore, we add the external edge $e_2 = (u, v)$ with latency $c_{e_2}(x) = \eps x$, for some $\eps \in (0, \frac{1}{4\gamma_1})$ (see also Fig.~\ref{fig:inductionstep}.a). Also, each edge $e$ of $D$ has latency $c_{e}(x) = \eps x$. We next consider the corresponding routing instance $\G'$ with an arbitrary traffic rate $r > 0$. Throughout the proof, when we define a routing instance, we omit, for simplicity, the coordinate $c$, referring to the latency functions, with the understanding that they are defined as above.

\fig{0.85\textwidth}{figures/inductivestep}%
{\label{fig:inductionstep}(a) The network $G'$ constructed in the proof of Lemma~\ref{l:rec-gap}. The structure of $G'$ is similar to the structure of the network $G$ in Fig.~\ref{fig:redgraphgood}, with each external edge $e_i$, except for $e_2$, replaced by the edgework $G_i$. (b) The structure of a best subnetwork $H$ of $G'$, with $\PoA = 1$, when $D$ contains a pair of vertex-disjoint paths, $p$ and $q$, connecting $s_1$ to $t_1$ and $s_2$ to $t_2$. To complete $H$, we use an optimal subnetwork (or simply, subedgework) of each edgework $G_i$.}

Intuitively, each $G_i$, $i \in \{ 4, \ldots, 9 \}$, behaves as an external edge (hence the term edge(net)work), which at optimality has a bottleneck cost of $r/\gamma_1$, for any traffic rate $r$ entering $G_i$. Moreover, if $\I$ is a $\YES$-instance of $\DD$, the edgework $G_i$ has a subedgework $H_i$ for which $B(H_i, r) = r/\gamma_1$, for any $r$, while if $H_i$ does not contain any copies of a good subnetwork of $D$ (or, if $\I$ is a $\NO$-instance), for all subedgeworks $H'_i$ of $G_i$, $B(H'_i, r) \geq r/\gamma_2$, for any $r$. The same holds for $G_1$ and $G_3$, but with a worst equilibrium bottleneck cost of $r/(2\gamma_1)$ in the former case, and of $r/(2\gamma_2)$ in the latter case, because the latency functions of $G_1$ and $G_3$ are scaled by $1/2$ (see also Proposition~\ref{pr:scaling}).

The proofs of the following propositions are conceptually similar to the proofs of the corresponding claims in the proof Lemma~\ref{l:gap}.

\begin{proposition}\label{pr:gap-opt}
The optimal bottleneck cost of $\G'$ is $\Bst(\G') = r/(4\gamma_1)$.
\end{proposition}

\begin{proof}
We have to show that $\Bst(\G') = r/(4\gamma_1)$. For the upper bound, as in the proof of Lemma~\ref{l:gap}, we assume that $D$ contains an $s_1 - t_2$ path $p$ and an $s_2 - t_1$ path $q$, which are edge-disjoint. We route (i) $r/4$ units of flow through the edgeworks $G_4$, $G_5$, next through the path $p$, and next through the edgework $G_9$, (ii) $r/4$ units through the edgeworks $G_6$, next through the path $q$, and next through the edgeworks $G_7$ and $G_8$, and (ii) $r/2$ units through the edgework $G_1$, next through the external edge $e_2$, and next through the edgework $G_3$. These routes are edge(work)-disjoint, and if we route the flow optimally through each edgework, the bottleneck cost is $r/(4\gamma_1)$. As for the lower bound, we observe that the edgeworks $H_1$, $H_4$, and $H_6$ essentially form an $s-t$ cut in $G'$, and thus every feasible flow has a bottleneck cost of at least $r/(4\gamma_1)$.
\qed\end{proof}

\begin{proposition}\label{pr:gap-yes}
If $\I$ is a $\YES$-instance, there is a subnetwork $H$ of $G'$ with $B(H, r) = r/(4\gamma_1)$.
\end{proposition}

\begin{proof}
If $\I$ is a $\YES$-instance of $\DD$, then (i) there are two vertex-disjoint paths in $D$, $p$ and $q$, connecting $s_1$ to $t_1$ and $s_2$  to $t_2$, and (ii) there is an optimal subnetwork (or simply, subedgework) $H_i$ of each edgework $G_i$ so that for any traffic rate $r$ routed through $H_i$, the worst equilibrium bottleneck cost $B(H_i, r)$ is $r/\gamma_1$, if $i \in \{4, \ldots, 9\}$, and $r/(2\gamma_1)$, if $i \in \{1, 3\}$.
Let $H$ be the subnetwork of $G'$ that consists of only the edges of the paths $p$ and $q$ from $D$, of the external edge $e_2$, and of the optimal subedgeworks $H_i$, $i \in \{ 1, 3, \ldots, 9\}$ (see also Fig.~\ref{fig:inductionstep}.b). We observe that we can route: (i) $r/4$ units of flow through the subedgeworks $H_4$, $H_5$, next through the path $p$, and next through the subedgeworks $H_7$ and $H_8$, (ii) $r/4$ units of flow through the subedgework $H_6$, next through the path $q$, and next through the subedgework $H_9$, and (iii) $r/2$ units of flow through the subedgework $H_1$, next through the external edge $e_2$, and next through the subedgework $H_3$. These routes are edge(work)-disjoint, and if we use any Nash flow through each of the routing instances $(H_i, r/4)$, $i \in \{ 4, \ldots, 9\}$, $(H_1, r/2)$, and $(H_3, r/2)$, we obtain a Nash flow of the instance $(H, r)$ with a bottleneck cost of $r/(4\gamma_1)$.

We next show that any Nash flow of $(H, r)$ has a bottleneck cost of at most $r/(4\gamma_1)$. To reach a contradiction, let us assume that some feasible Nash flow $f$ has bottleneck cost $B(f) > r/(4\gamma_1)$. We recall that $f$ is a Nash flow iff the edges of $G'$ with bottleneck cost $B(f) > r/(4\gamma_1)$ form an $s-t$ cut. This cut does not include the edges of the paths $p$ and $q$ and the external edge $e_2$, due to the choice of their latencies.
Hence, this cut includes a similar cut either in $H_6$ or in $H_9$ (or in both), either in $H_1$ or $H_3$ (or in both), and either in $H_4$ or in $H_8$ (or in $H_5$ or in $H_6$, in certain combinations with other subedgeworks, see also Fig.~\ref{fig:inductionstep}.b). Let us consider the case where the edges with bottleneck cost $B(f) > r/(4\gamma_1)$ form a cut in $H_1$, $H_4$, and $H_6$. Namely, the edges of $H_1$, $H_4$, and $H_6$, with bottleneck cost equal to $B(f) > r/(4\gamma_1)$ form an $s - u$, an $s - v$, and an $s - s_2$ cut, respectively, and thus the restriction of $f$ to each of $H_1$, $H_4$, and $H_6$, is an equilibrium flow of bottleneck cost greater than $r/(4\gamma_1)$ for the corresponding routing instance. Since $\I$ is a $\YES$-instance, this can happen only if the flow through $H_1$ is more than $r/2$, and the flow through each of $H_4$ and $H_6$ is more than $r/4$ (see also property (ii) of optimal subedgeworks above). Hence, we obtain that more than $r$ units of flow leave $s$, a contradiction. All other cases are similar.
\qed\end{proof}

The most technical part of the proof is to show (3), namely that for any subnetwork $H'$ of $G'$, if $H'$ does not contain any copies of a good subnetwork of $D$, then $B(H', r) \geq r/(3\gamma_2)$. This immediately implies (2), since if $\I$ is a $\NO$-instance of $\DD$, $D$ includes no good subnetworks.
To prove (3), we consider any subnetwork $H'$ of $G'$, 
and let $H'_i$ be the subedgework of each $G_i$ present in $H'$. We assume that the subedgeworks $H'_i$ do not contain any copies of a good subnetwork of $D$, and show that if the subnetwork of $D$ connecting $s_1$ and $s_2$ to $t_1$ and $t_2$ in $H'$ is also bad, then $B(H', r) \geq r/(3\gamma_2)$.

At the technical level, we repeatedly use the idea of a flow $f_i$ through a subedgework $H'_i$ that ``saturates'' $H'_i$, in the sense that $f_i$ is a Nash flow with bottleneck cost at least $r_i/(3\gamma_2)$ for the subinstance $(H'_i, r_i)$. Formally, we say that a flow rate $r_i$ \emph{saturates} a subedgework $H'_i$ if $B(H'_i, r_i) \geq r_i / (3\gamma_2)$. We refer to the flow rate $r^s_i$ for which $B(H'_i, r^s_i) = r^s_i/(3\gamma_2)$ as the \emph{saturation rate} of $H'_i$. We note that the saturation rate $r^s_i$ is well-defined, because the latency functions of $G_i$s are linear and strictly increasing. Moreover, by property (3) of gap instances, the saturation rate of each subedgework $H'_i$ is $r^s_i \leq r/3$, if $i \in \{4, \ldots, 9\}$, and $r^s_i \leq 2r/3$, if $i \in \{1, 3\}$. Thus, at the intuitive level, the subedgeworks $H'_i$ behave as the external edges of the network constructed in the proof of Lemma~\ref{l:gap}. Hence, to show that $B(H', r) \geq r/(3\gamma_2)$, we need to construct a flow of rate (at most) $r$ that saturates a collection of subedgeworks comprising an $s-t$ cut in $H'$.

Our first step in this direction is to simplify the possible structure of $H'$.

\begin{proposition}\label{pr:gap-structure}
Let $H'$ be any subnetwork of $G'$ whose subedgeworks $H'_i$ do not contain any copies of a good subnetwork of $D$. Then, either the subnetwork $H'$ contains (i) the external edge $e_2$, (ii) at least one path outgoing from each of $s_1$ and $s_2$ to either $t_1$ or $t_2$, and (iii) at least one path incoming to each of $t_1$ and $t_2$ from either $s_1$ or $s_2$, or $B(H', r) \geq r/(3\gamma_2)$.
\end{proposition}

\begin{proof}
For convenience, in the proofs of Proposition~\ref{pr:gap-structure} and Proposition~\ref{pr:gap-routing}, we slightly abuse the terminology, and say that a collection of subedgeworks of $H'$ form an $s-t$ cut, if the union of any cuts in them comprises an $s-t$ cut in $H'$. Moreover, whenever we write that $r_i$ units of flow are routed through a subedgework $H_i$, we assume that the routing through $H_i$ corresponds to the worst Nash flow of $(H_i, r_i)$.
Also, we recall that since subedgeworks $H'_i$ do not contain any copies of a good subnetwork of $D$, by property (3) of gap instances, the saturation rate of each $H'_i$ is $r^s_i \leq r/3$, if $i \in \{4, \ldots, 9\}$, and $r^s_i \leq 2r/3$, if $i \in \{1, 3\}$.

We start by showing that either the external edge $e_2$ is present in $H'$, or $B(H', r) \geq r/(3\gamma_2)$. Indeed, if $e_2$ is not present in $H'$, the subedgeworks $H'_4$, $H'_6$, and $H'_8$ form an $s-t$ cut in $H'$. Therefore, we can construct a Nash flow $f$ that routes at least $r/3$ units of flow through $H'_4$, $H'_6$, and $H'_8$, and has $B(f) \geq r/(3\gamma_2)$. Therefore, we can assume, without loss of generality, that $e_2$ is present in $H'$.

Similarly, we show that either $H'$ includes at least one path outgoing from $s_2$ to either $t_1$ or $t_2$, and at least one path incoming to $t_2$ from either $s_1$ or $s_2$, or $B(H', r) \geq r/(3\gamma_2)$. In particular, if $s_2$ is connected to neither $t_1$ nor $t_2$, the subedgeworks $H'_1$ and $H'_4$ form an $s-t$ cut in $H'$. Thus, we can construct a Nash flow $f$ that saturates the subedgework $H'_1$ (or the subedgeworks $H'_3$ and $H'_8$, if $r^s_1 > r^s_3+r^s_8$) and the subedgework $H'_4$ (or the subedgeworks $H'_3$ and either $H'_5$, or $H'_9$ and at least one of the $H'_7$ and $H'_8$, depending on $r^s_4$ and the saturation rates of the rest). We note that this is always possible with $r$ units of flow, because $r^s_1 \leq 2r/3$ and $r^s_4 \leq r/3$. Therefore, the bottleneck cost of $f$ is $B(f) \geq r/(3\gamma_2)$. In case where there is no path incoming to $t_2$ from either $s_1$ or $s_2$, the subedgeworks $H'_3$ and $H'_8$ form an $s-t$ cut in $H'$. As before, we can construct a Nash flow $f$ that saturates the subedgeworks $H'_3$ and $H'_8$ (or, as before, an appropriate combination of other subedgeworks carrying flow to $H'_3$ and $H'_8$), and has $B(f) \geq r/(3\gamma_2)$. Therefore, we can assume, without loss of generality, that $H'$ includes at least one path outgoing from $s_2$ to either $t_1$ or $t_2$, and at least one path incoming to $t_2$ from either $s_1$ or $s_2$.

Next, we show that either $H'$ includes at least one path outgoing from $s_1$ to either $t_1$ or $t_2$, and at least one path incoming to $t_1$ from either $s_1$ or $s_2$, or $B(H', r) \geq r/(3\gamma_2)$. In particular, let us consider the case where $s_1$ is connected to neither $t_1$ nor $t_2$ (see also Fig.~\ref{fig:inductivestepbad}.a, the case where there is no path incoming to $t_1$ from either $s_1$ or $s_2$ can be handled similarly). In the following, we assume that $s_2$ is connected to $t_2$ (because, by the analysis above, we can assume that there is a path incoming to $t_2$, and $s_1$ is not connected to $T_2$), and construct a Nash flow $f$ of bottleneck cost $B(f) \geq r/(3\gamma_2)$.

We first route $\min\{ r^s_6, r^s_9 \} \leq r/3$ units of flow through the subedgework $H'_6$, next through an $s_2 - t_2$ path, and finally through the subedgework $H'_9$, and saturate either $H'_6$ or $H'_9$ (or both).
If there is an $s_2 - t_1$ path and $H'_6$ is not saturated, we keep routing flow through $H'_6$, next through an $s_2 - t_1$ path, and next through the subedgeworks $H'_7$ and $H'_8$, until either the subedgework $H'_6$ or at least one of the subedgeworks $H'_7$ and $H'_8$ become saturated.
Thus, we saturate at least one edgework on every $s-t$ path that includes $s_2$.

Next, we show how to saturate at least one edgework on every $s-t$ path that includes either $v$ or $u$. If $r^s_1 \leq r^s_3 \leq 2r/3$, we route $r^s_1$ units of flow through $H'_1$, $e_2$, and $H'_3$, and route $\min\{ r^s_3 - r^s_1, r^s_4\}$ units of flow through $H'_4$ and $H'_3$, and saturate either $H'_1$ and $H'_3$ or $H'_1$ and $H'_4$.
If $r^s_3 < r^s_1 \leq 2r/3$, we route $r^s_3$ units of flow through $H'_1$, $e_2$, and $H'_3$, and route $\min\{ r^s_3 - r^s_1, r^s_8\}$ units of flow through $H'_1$ and $H'_8$, and saturate either $H'_1$ and $H'_3$ or $H'_3$ and $H'_8$.

The remaining flow (if any) can be routed through these routes, in proportional rates. In all cases, we obtain an $s-t$ cut consisting of saturated subedgeworks. Thus, the resulting flow $f$ is a Nash flow with a bottleneck cost of at least $r/(3\gamma_2)$.
\qed\end{proof}

\fig{0.9\textwidth}{figures/inductivestepbad}%
{\label{fig:inductivestepbad}The structure of possible subnetworks of $G'$ when there is no pair of vertex-disjoint paths connecting $s_1$ to $t_1$ and $s_2$ to $t_2$. The subnetwork (a) contains a path outgoing from $s_2$ to either $t_1$ or $t_2$, and no path outgoing from $s_1$ to either $t_1$ or $t_2$. Hence, no flow can be routed through the edgework $G_5$, and thus we can regard $G_5$ as being absent from $H'$. The subnetwork (b) essentially corresponds to the case where all edges of $G'$ are present in $H'$.}

Now, let us focus on a subnetwork $H'$ of $G'$ that contains (i) the external edge $e_2$, (ii) at least one path outgoing from each of $s_1$ and $s_2$ to either $t_1$ or $t_2$, and (iii) at least one path incoming to each of $t_1$ and $t_2$ from either $s_1$ or $s_2$. If the copy of the subnetwork of $D$ connecting $s_1$ and $s_2$ to $t_1$ and $t_2$ in $H'$ is also bad, properties (ii) and (iii) imply that $H'$ contains an $s_1 - t_2$ path $p$ and an $s_2 - t_1$ path $q$.
%
%
In this case, the entire subnetwork $H'$ essentially behaves as if it included all edges of $G'$. Then, a routing similar to that in Fig.~\ref{fig:redgraphbad}.c gives a Nash flow with a bottleneck cost of $r/(3\gamma_2)$. This intuition is formalized by the following proposition.

\begin{proposition}\label{pr:gap-routing}
Let $H'$ be any subnetwork of $G'$ that satisfies (i), (ii), and (iii) above, and does not contain any copies of a good subnetwork of $D$.  Then $B(H', r) \geq r/(3\gamma_2)$.
\end{proposition}

\begin{proof}
In the following, we consider a subnetwork $H'$ of $G'$ which does not include any copies of a good subnetwork of $D$, and contains (i) the external edge $e_2$, (ii) at least one path outgoing from each of $s_1$ and $s_2$ to either $t_1$ or $t_2$, and (iii) at least one path incoming to each of $t_1$ and $t_2$ from either $s_1$ or $s_2$. Since the copy of the subnetwork of $D$ connecting $s_1$ and $s_2$ to $t_1$ and $t_2$ in $H'$ is bad, properties (ii) and (iii) imply that $H'$ contains an $s_1 - t_2$ path $p$ and an $s_2 - t_1$ path $q$.
Moreover, since the subedgeworks $H'_i$ do not include any copies of a good subnetwork of $D$, by property (3) of gap instances, the saturation rate of each $H'_i$ is $r^s_i \leq r/3$, if $i \in \{4, \ldots, 9\}$, and $r^s_i \leq 2r/3$, if $i \in \{1, 3\}$.

We next show that for such a subnetwork $H'$, we can construct a Nash flow $f$ of bottleneck cost $B(f) \geq r/(3\gamma_2)$. At the conceptual level, as in the last case in the proof of Lemma~\ref{l:gap}, we seek to construct a Nash flow by routing $r/3$ units of flow through each of the following three routes: (i) $H'_1$, $e_2$, and $H'_3$, (ii) $H'_1$, $e_2$, $H'_5$, $p$, and $H'_9$, and (iii) $H'_6$, $q$, $H'_7$, $e_2$, and $H'_3$. However, for simplicity of the analysis, we regard the corresponding (edge) flow as being routed through just two routes: a rate of $2r/3$ is routed through $H'_1$, $e_2$, and $H'_3$, and a rate of $r/3$ is routed through the (possibly non-simple) route $H'_6$, $q$, $H'_7$, $e_2$, $H'_5$, $p$, and $H'_9$. We do so because the latter routing allows us to consider fewer cases in the analysis. We conclude the proof by showing that if the latter route is not simple, we can always decompose the flow into the three simple routes above.

In the following, we assume that with a flow rate of at most $2r/3$, routed through $H'_1$, $e_2$, and $H'_3$ (and possibly through $H'_4$ and $H'_8$), we can saturate both subedgeworks $H'_1$ and $H'_3$. Otherwise, as in the last case in the proof of Proposition~\ref{pr:gap-structure}, we can show how with a total flow rate of at most $2r/3$, part of which is routed through either $H'_4$ or $H'_8$, we can saturate either $H'_1$ and $H'_4$, or $H'_3$ and $H'_8$. Then, the remaining $r/3$ units of flow can saturate either $H'_6$, in the former case, or $H'_9$, in the latter case. Thus, we obtain a Nash flow with a bottleneck cost of at least $r/(3\gamma_2)$.

Having saturated both subedgeworks $H'_1$ and $H'_3$, using at most $2r/3$ units of flow, we have at least $r/3$ units of flow to saturate the subedgeworks $H'_5$, $H'_6$, $H'_7$, and $H'_9$, or an appropriate subset of them, so that together with $H'_1$ and $H'_3$, they form an $s-t$ cut in $H'$. We first route $\tau \equiv \min\{ r^s_5, r^s_6, r^s_7, r^s_9 \} \leq r/3$ units of flow through $H'_6$, $q$, $H'_7$, $e_2$, $H'_5$, $p$, and $H'_9$, until $t$, and consider different cases, depending on which of the subedgeworks $H'_5$, $H'_6$, $H'_7$, and $H'_9$ has the minimum saturation rate.

\begin{itemize}
\item If $\tau = r^s_9$, $H'_9$ is saturated. We first assume that $H'$ contains an $s_1 - t_1$ path, and route (some of) the remaining flow (i) through $H'_4$, $H'_5$, an $s_1-t_1$ path, $H'_7$, and $H'_8$, and (ii) through $H'_6$, $q$, $H'_7$, and $H'_8$. We do so until either at least one of the subedgeworks $H'_7$ and $H'_8$ or the subedgework $H'_6$ and at least one of the subedgeworks $H'_4$ and $H'_5$ become saturated. Since $\min\{ r^s_7, r^s_8 \} \leq r/3$, this requires at most $r/3-\tau$ additional units of flow. If $H'$ does not contain an $s_1 - t_1$ path, we route the remaining flow only through route (ii), until either at least one of the subedgeworks $H'_7$ and $H'_8$ or the subedgework $H'_6$ become saturated. In both cases, the newly saturated subedgeworks, together with the saturated subedgeworks $H'_1$, $H'_3$, and $H'_9$, form an $s-t$ cut of saturated subedgeworks, and thus the worst equilibrium bottleneck cost is at least $r/(3\gamma_2)$.

\item If $\tau = r^s_6$, $H'_6$ is saturated. As before, we first assume that $H'$ contains an $s_1 - t_1$ path, and route the remaining flow (i) through $H'_4$, $H'_5$, $p$, and $H'_9$, and (ii) through $H'_4$, $H'_5$, an $s_1-t_1$ path, $H'_9$ and $H'_8$, until either at least one of the subedgeworks $H'_4$ and $H'_5$, or the subedgework $H'_9$ and at least one of the subedgeworks $H'_7$ and $H'_8$ become saturated. Since $\min\{r^s_4, r^s_5\} \leq r/3$, this requires at most $r/3-\tau$ additional units of flow. If $H'$ does not contain an $s_1 - t_1$ path, we route the remaining flow only through route (i), until either at least one of the subedgeworks $H'_4$ and $H'_5$ or the subedgework $H'_9$ become saturated. In both cases, the newly saturated subedgeworks, together with the saturated subedgeworks $H'_1$, $H'_3$, and $H'_6$, form an $s-t$ cut of saturated subedgeworks, and thus the worst equilibrium bottleneck cost is at least $r/(3\gamma_2)$.

\item If $\tau = r^s_7$, $H'_7$ is saturated. Then, we first assume that $H'$ contains an $s_2 - t_2$ path, and route the remaining flow (i) through $H'_4$, $H'_5$, $p$, and $H'_9$, and (ii) through $H'_6$, an $s_2-t_2$ path, and $H'_9$, until either the subedgework $H'_9$, or the subedgework $H'_6$ and at least one of the subedgeworks $H'_4$ and $H'_5$ become saturated. Since $r^s_9 \leq r/3$, this requires at most $r/3-\tau$ additional units of flow. If $H'$ does not contain an $s_2 - t_2$ path, we route the remaining flow only through route (i), until either at least one of the subedgeworks $H'_4$ and $H'_5$ or the subedgework $H'_9$ become saturated. In both cases, the newly saturated subedgeworks, together with the saturated subedgeworks $H'_1$, $H'_3$, and $H'_7$, form an $s-t$ cut of saturated subedgeworks, and thus the worst equilibrium bottleneck cost is at least $r/(3\gamma_2)$.

\item If $\tau = r^s_5$, $H'_5$ is saturated. As before, we first assume that $H'$ contains an $s_2 - t_2$ path, and route the remaining flow (i) through $H'_6$, $q$, $H'_7$, and $H'_8$, and (ii) through $H'_6$, an $s_2-t_2$ path, and $H'_9$, until either the subedgework $H'_6$, or the subedgework $H'_9$ and at least one of the subedgeworks $H'_7$ and $H'_8$ become saturated. Since $r^s_6 \leq r/3$, this requires at most $r/3-\tau$ additional units of flow. If $H'$ does not contain an $s_2 - t_2$ path, we route the remaining flow only through route (i), until either at least one of the subedgeworks $H'_7$ and $H'_8$ or the subedgework $H'_6$ become saturated. In both cases, the newly saturated subedgeworks, together with the saturated subedgeworks $H'_1$, $H'_3$, and $H'_5$, form an $s-t$ cut of saturated subedgeworks, and thus the worst equilibrium bottleneck cost is at least $r/(3\gamma_2)$.
\end{itemize}

Thus, in all cases, we obtain an equilibrium flow with a bottleneck cost of at least $r/(3\gamma_2)$. However, in the construction above, the route $H'_6$, $q$, $H'_7$, $e_2$, $H'_5$, $p$, $H'_9$ may not be simple, since $p$ and $q$ may not be vertex-disjoint. If this is the case, this route is technically not allowed by our model, where the flow is only routed through simple $s-t$ paths. Nevertheless, the corresponding edge flow can be decomposed into the following three simple routes: %
(i) $H'_1$, $e_2$, and $H'_3$, (ii) $H'_1$, $e_2$, $H'_5$, $p$, and $H'_9$, and (iii) $H'_6$, $q$, $H'_7$, $e_2$, and $H'_3$, unless $\min\{ r^s_1, r^s_3 \} \leq r/3$.
Moreover, if $\min\{ r^s_1, r^s_3 \} \leq r/3$, we can work as above, and saturate both $H'_1$ and $H'_3$ with at most $r/3$ units of flow. The remaining $2r/3$ units of flow can be routed (i) through $H'_6$, $q$, $H'_7$, and $H'_8$, and (ii) through $H'_4$, $H'_5$, $p$, and $H'_9$, and possibly either through $H'_6$, an $s_2 - t_2$ path%
\footnote{We note that if the paths $p$ and $q$ are not vertex-disjoint, we also have an $s_1 -t_1$ path and an $s_2 - t_2$ path in $H'$.},
and $H'_9$, or through $H'_4$, $H'_5$, an $s_1 - t_1$ path, $H'_7$, and $H'_8$,  until either $H'_4$ (or $H'_5$) and $H'_6$, or $H'_7$ (or $H'_8$) and $H'_9$ are saturated. This routing only uses simple routes. In addition, these saturated subedgeworks, together with the saturated subedgeworks $H'_1$ and $H'_3$, form an $s-t$ cut of saturated subedgeworks, and thus the worst equilibrium bottleneck cost is at least $r/(3\gamma_2)$.
\qed\end{proof}

Propositions~\ref{pr:gap-structure}~and~\ref{pr:gap-routing} immediately imply part (3) of the lemma, which, in turn, implies part (2).
%
\qed\end{proof}

Each time we apply Lemma~\ref{l:rec-gap} to a $\gamma$-gap instance $G$, we obtain a $4\gamma/3$-gap instance $G'$ with a number of vertices of at most $8$ times the vertices of $G$ plus the number of vertices of $D$. Therefore, if we start with an instance $\I = (D, s_1, s_2, t_1, t_2)$ of $\DD$, where $D$ has $k$ vertices, and apply Lemma~\ref{l:gap} once, and subsequently apply Lemma~\ref{l:rec-gap} for $\lfloor\log_{4/3} k\rfloor$ times, we obtain a $k$-gap instance $\G'$, where the network $G'$ has $n = O(k^{8.23})$ vertices. Suppose now that there is a polynomial-time algorithm $A$ that approximates the best subnetwork of $G'$ within a factor of $O(k^{1-\eps}) = O(n^{0.121-\eps})$, for some small $\eps > 0$. Then, if $\I$ is a $\YES$-instance of $\DD$, algorithm $A$, applied to $G'$, should return a best subnetwork $H$ with at least one copy of a good subnetwork of $D$. Since $H$ contains a polynomial number of copies of subnetworks of $D$, and we can check whether a subnetwork of $D$ is good in polynomial time, we can efficiently recognize $\I$ as a $\YES$-instance of $\DD$. On the other hand, if $\I$ is a $\NO$-instance of $\DD$, $D$ includes no good subnetworks. Again, we can efficiently check that in the subnetwork returned by algorithm $A$, there are not any copies of a good subnetwork of $D$, and hence recognize $\I$ as a $\NO$-instance of $\DD$. Thus, we obtain that:

\begin{theorem}\label{th:inapprox}
For bottleneck routing games with strictly increasing linear latencies, it is $\NP$-hard to approximate $\Ptwo$ within a factor of $O(n^{0.121-\eps})$, for any constant $\eps > 0$. 
\end{theorem}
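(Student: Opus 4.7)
The plan is to iterate the gap-amplification of Lemma~\ref{l:rec-gap} starting from the base $4/3$-gap of Lemma~\ref{l:gap}, then translate the resulting hardness-of-distinguishing statement into inapproximability of $\Ptwo$ via the ``good subnetwork'' certificate, which sidesteps the fact that we cannot efficiently evaluate $B(H,r)$.

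First I would fix an arbitrary $\DD$ instance $\I = (D, s_1, s_2, t_1, t_2)$ and let $k = |V(D)|$. I apply Lemma~\ref{l:gap} once to obtain a $4/3$-gap instance $G_0$ of size $O(k)$, in which $\gamma_1 = 4$ and $\gamma_2 = 3$. Then I iterate Lemma~\ref{l:rec-gap}: if $G_i$ is a $\gamma^{(i)}$-gap instance on $n_i$ vertices, the lemma produces a $(4/3)\gamma^{(i)}$-gap instance $G_{i+1}$ on $n_{i+1} \leq 8 n_i + k$ vertices (the $8$ factor comes from replacing $8$ of the $9$ external edges of the base construction by a copy of $G_i$, while the additive $k$ accounts for the fresh copy of $D$ glued into the middle). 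Setting $\ell = \lfloor \log_{4/3} k \rfloor$ and solving the recurrence gives $n \equiv n_\ell = O(8^\ell k) = O(k^{\log_{4/3} 8 + 1}) = O(k^{8.228})$, while $\gamma^{(\ell)} = (4/3)^{\ell+1} = \Omega(k)$. Hence $k = \Omega(n^{1/8.228}) = \Omega(n^{0.1215})$, so the resulting instance $G_\ell$ is a $\gamma$-gap instance on $n$ vertices with $\gamma = \Omega(n^{0.1215})$, and the hidden constant can be made at least $n^{0.121}$ for $n$ large enough, or absorbed into the $\eps$.

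Next I would argue inapproximability. Suppose, for contradiction, that a polynomial-time algorithm $A$ achieves approximation ratio $\alpha = O(n^{0.121-\eps})$ for $\Ptwo$. Apply $A$ to $G_\ell$ and let $H$ be the returned subnetwork. By Lemma~\ref{l:rec-gap} parts (1) and (2), the gap between the $\YES$ and $\NO$ cases of the underlying $\DD$ instance $\I$ is a factor of $\gamma^{(\ell)}$, which for large enough $k$ exceeds $\alpha$. The key subtlety, already flagged in the model section, is that we cannot compute $B(H, r)$ in polynomial time, so we cannot directly read off from $H$ which case we are in. Here I invoke part (3) of Lemma~\ref{l:rec-gap}: if $\I$ is a $\YES$-instance, then $A$ must return an $H$ with $B(H,r) \leq \alpha \cdot B^{\ast}(\G_\ell, r) < r/(3\gamma_2^{(\ell-1)})$, so by part (3), $H$ must contain at least one copy of a \emph{good} subnetwork of $D$; if $\I$ is a $\NO$-instance, no copy of $D$ inside $H$ is good. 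Since $G_\ell$ contains only polynomially many copies of $D$ and goodness of each copy is checkable in polynomial time, we can decide $\DD$ in polynomial time, contradicting its $\NP$-hardness.

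The main obstacle I anticipate is the bookkeeping in the second paragraph: I need to verify that the recursive construction preserves all three properties required as input to Lemma~\ref{l:rec-gap} at every level (linear latencies with positive coefficients, the $\Bst$ identity, and the good-subnetwork certificate being propagated). In particular, inside $G_\ell$ a ``good subnetwork of $D$'' now refers to any of the polynomially many embedded copies of $D$, and one has to argue that the polynomial-time check of part (3) extends naturally across the recursion — i.e., it suffices to check each embedded copy of $D$ independently. Once that is in place, the numerical calculation $\log_{4/3} 8 + 1 \approx 8.228$ and the translation $k^{1-\eps} = n^{(1-\eps)/8.228} = \Omega(n^{0.121-\eps'})$ is routine, and Theorem~\ref{th:inapprox} follows.
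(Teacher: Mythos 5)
Your proposal is correct and follows essentially the same route as the paper: apply Lemma~\ref{l:gap} once to get a $4/3$-gap instance, iterate Lemma~\ref{l:rec-gap} for $\lfloor\log_{4/3}k\rfloor$ rounds to reach a $k$-gap on $n=O(k^{8.23})$ vertices, then use the polynomial-time ``good subnetwork of $D$'' check from part~(3) of the lemmas to decide $\DD$ without ever needing to evaluate $B(H,r)$. The bookkeeping you flag at the end is exactly what the statement and proof of Lemma~\ref{l:rec-gap} already provide (linearity and strict monotonicity of latencies, the $\Bst$ identity, and the propagation of the good-subnetwork certificate to all embedded copies of $D$), so no additional argument is needed beyond invoking that lemma inductively.
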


\section{Networks with Quasipolynomially Many Paths}
\label{s:sparsification}

In this section, we approximate, in quasipolynomial-time, the best subnetwork and its worst equilibrium bottleneck cost for instances $\G = (G, c, r)$ where the network $G$ has quasipolynomially many $s-t$ paths, the latency functions are continuous and satisfy a Lipschitz condition,
%
%
and the worst Nash flow in the best subnetwork routes a non-negligible amount of flow on all used edges.

We highlight that the restriction to networks with quasipolynomially many $s-t$ paths is somehow necessary, in the sense that Theorem~\ref{th:inapprox} shows that if the network has exponentially many $s-t$ paths, as it happens for the hard instances of $\DD$, and thus for the networks $G$ and $G'$ constructed in the proofs of Lemma~\ref{l:gap} and Lemma~\ref{l:rec-gap}, it is $\NP$-hard to approximate $\Ptwo$ within any reasonable factor.
%
%
In addition, we assume here that there is a constant $\delta > 0$, such that the worst Nash flow in $H^\ast$ routes more than $\delta$ units of flow on all edges of the best subnetwork $H^\ast$.

In the following, we normalize the traffic rate $r$ to $1$. This is for convenience and can be made without loss of generality%
\footnote{Given a bottleneck routing game $\G$ with traffic rate $r > 0$, we can replace each latency function $c_e(x)$ with $c_e(rx)$, and obtain a bottleneck routing game $\G'$ with traffic rate $1$, and the same Nash flows, $\PoA$, and solutions to $\Ptwo$.}.
%
%
%
Our algorithm is based on \cite[Lemma~2]{FKS09}, which applies Alth\"ofer's ``Sparsification'' Lemma \cite{Alt94}, and shows that any flow can be approximated by a ``sparse'' flow using logarithmically many paths.

\begin{lemma}\label{l:sparse-flow}
Let $\G = (G(V, E), c, 1)$ be a routing instance, and let $f$ be any $\G$-feasible flow. Then, for any $\eps > 0$, there exists a $\G$-feasible flow $\ft$ using at most $k(\eps) = \lfloor\log (2 m)/(2\eps^2)\rfloor+1$ paths, such that for all edges $e$, $|\ft_e - f_e| \leq \eps$, if $f_e > 0$, and $\ft_e = 0$, otherwise.
\end{lemma}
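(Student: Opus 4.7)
Since the traffic rate is normalized to $1$, the flow vector $f$ can be regarded as a probability distribution over the set $\P$ of $s$--$t$ paths: $\sum_{p\in\P} f_p = 1$ with $f_p \geq 0$. The plan is to apply Althöfer's sparsification via the probabilistic method: draw $k \equiv k(\eps) = \lfloor\log(2m)/(2\eps^2)\rfloor+1$ independent paths $p^{(1)},\dots,p^{(k)}$ from the distribution $\{f_p\}_{p\in\P}$, and define $\ft$ to place weight $1/k$ on each drawn path (aggregating multiplicities). By construction, $\ft$ is a $\G$-feasible flow that uses at most $k$ distinct paths, so the sparsity bound is automatic.

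For each fixed edge $e$, write $\ft_e = \frac{1}{k}\sum_{i=1}^{k} X_i^{(e)}$, where $X_i^{(e)} = \mathbf{1}[e \in p^{(i)}]$. The $X_i^{(e)}$ are i.i.d. Bernoulli variables with $\Exp[X_i^{(e)}] = \sum_{p\,:\,e\in p} f_p = f_e$, so $\Exp[\ft_e] = f_e$. Hoeffding's inequality then gives
\[
\Prob\bigl[\,|\ft_e - f_e| > \eps\,\bigr] \leq 2\exp(-2 k \eps^2).
\]
By the choice of $k$, we have $2\exp(-2k\eps^2) \leq 2\exp(-\log(2m)) = 1/m$. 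Taking a union bound over the $m$ edges, the probability that \emph{some} edge $e$ violates $|\ft_e - f_e| \leq \eps$ is strictly less than $1$, so at least one realization of the sample yields a flow $\ft$ satisfying $|\ft_e - f_e| \leq \eps$ simultaneously for every edge $e$.

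Finally, the ``support'' clause $\ft_e = 0$ when $f_e = 0$ is immediate from the sampling scheme: if $f_e = 0$, then every path $p$ containing $e$ has $f_p = 0$, hence has probability $0$ of being sampled, so no sampled path uses $e$ and $\ft_e = 0$ deterministically. This observation applies to every realization, in particular to the good one produced by the probabilistic argument, yielding the desired flow $\ft$.

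The argument is essentially a textbook application of Althöfer's lemma; there is no genuine obstacle beyond selecting the right quantitative parameters. The only subtle point worth flagging is that Hoeffding gives control on the deviation $|\ft_e - f_e|$ but not a multiplicative guarantee, which is exactly why the lemma is stated with an additive error $\eps$ rather than a relative one, and why the sparsity $k(\eps)$ scales as $\log(m)/\eps^2$ rather than logarithmically in $1/\eps$.
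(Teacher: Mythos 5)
Your proof is correct and is essentially the same argument that underlies the paper's citation to \cite[Lemma~2]{FKS09} and Alth\"ofer's Sparsification Lemma: sample $k(\eps)$ paths i.i.d.\ from the distribution $\{f_p\}_{p\in\P}$, take $\ft$ to be the empirical (uniform) distribution over the sampled multiset, and apply Hoeffding plus a union bound over the $m$ edges. This matches the paper's remark that $\ft$ is determined by a multiset $P$ of at most $k(\eps)$ paths drawn from the support of $f$ with $\ft_p = |P(p)|/|P|$, and your observation that edges with $f_e=0$ receive zero flow deterministically correctly handles the support clause.
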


By Lemma~\ref{l:sparse-flow}, there exists a sparse flow $\ft$ that approximates the worst Nash flow $f$ on the best subnetwork $H^\ast$ of $G$. Moreover, the proof of \cite[Lemma~2]{FKS09} shows that the flow $\ft$ is determined by a multiset $P$ of at most $k(\eps)$ paths, selected among the paths used by $f$. Then, for every path $p \in \P$, $\ft_p = |P(p)|/|P|$, where $|P(p)|$ is number of times the path $p$ is included in the multiset $P$, and $|P|$ is the cardinality of $P$. Therefore, if the total number $|\P|$ of $s-t$ paths in $G$ is quasipolynomial, we can find, in quasipolynomial-time, by exhaustive search, a flow-subnetwork pair that approximates the optimal solution of $\Ptwo$. Based on this intuition, we next obtain an approximation algorithm for $\Ptwo$ on networks with quasipolynomially many paths, under the assumption that there is a constant $\delta > 0$, such that the worst Nash flow in the best subnetwork $H^\ast$ routes more than $\delta$ units of flow on all edges of $H^\ast$. This assumption is necessary so that the exhaustive search on the family of sparse flows of Lemma~\ref{l:sparse-flow} can generate the best subnetwork $H^\ast$, which is crucial for the analysis.


\begin{theorem}\label{th:scheme}
Let $\G = (G(V, E), c, 1)$ be a bottleneck routing game with continuous latency functions that satisfy the Lipschitz condition with a constant $\xi > 0$, let $H^\ast$ be the best subnetwork of $G$, and let $f^\ast$ be the worst Nash flow in $H^\ast$. If for all edges $e$ of $H^\ast$, $f^\ast_e > \delta$, for some constant $\delta > 0$, then for any constant $\eps > 0$, we can compute in time
\( |\P|^{O(\log(2m)/\min\{ \delta^2, \eps^2/\xi^2\})} \)
a flow $f$ and a subnetwork $H$ such that: (i) $f$ is an $\eps/2$-Nash flow in the subnetwork $H$, (ii) $B(f) \leq B(H^\ast, 1) + \eps$, (iii) $B(H, 1) \leq B(f) + \eps/4$, and (iv) $B(f) \leq B(H, 1) + \eps/2$.
\end{theorem}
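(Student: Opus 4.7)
The plan is to enumerate all ``sparse'' flows guaranteed by Lemma~\ref{l:sparse-flow}, one of which is bound to approximate the worst Nash flow $f^\ast$ of the best subnetwork $H^\ast$. Set $\eps_0 = \tfrac{1}{2}\min\{\delta, \eps/(4\xi)\}$ and $k = \lfloor \log(2m)/(2\eps_0^2)\rfloor + 1 = O(\log(2m)/\min\{\delta^2, \eps^2/\xi^2\})$. Enumerate every $k$-multiset $P$ of simple $s-t$ paths of $G$; each $P$ determines a sparse flow $\ft$ with $\ft_p = |P(p)|/|P|$, inducing a subnetwork $H(\ft) = (V, \{e : \ft_e > 0\})$. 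Discard any $\ft$ that is not an $\eps/2$-Nash flow in $H(\ft)$ (this is checkable in polynomial time). Group the surviving candidates by $H(\ft)$, keep within each group the candidate with the maximum bottleneck cost (the intra-group representative, meant to approximate the worst Nash flow of that subnetwork), and output the pair $(f, H)$ whose representative $B(f)$ is smallest across groups. The number of multisets is $|\P|^{O(k)}$, matching the claimed running time.

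The existence of a good candidate in $H^\ast$'s group follows by sparsifying $f^\ast$ itself with parameter $\eps_0$. The resulting $\ft^\ast$ has $|\ft^\ast_e - f^\ast_e| \leq \eps_0$ on $f^\ast$-used edges and $\ft^\ast_e = 0$ elsewhere. The hypothesis $f^\ast_e > \delta > \eps_0$ on every edge of $H^\ast$ forces $H(\ft^\ast) = H^\ast$. The Lipschitz condition gives $|c_e(\ft^\ast_e) - c_e(f^\ast_e)| \leq \xi\eps_0$ edgewise, hence $|b_p(\ft^\ast) - b_p(f^\ast)| \leq \xi\eps_0$ pathwise. Together with the Nash condition of $f^\ast$ in $H^\ast$, this yields $b_p(\ft^\ast) \leq b_{p'}(\ft^\ast) + 2\xi\eps_0 \leq b_{p'}(\ft^\ast) + \eps/2$ for every $\ft^\ast$-used path $p$ and every $s-t$ path $p'$ in $H^\ast$, so $\ft^\ast$ is a valid candidate with $|B(\ft^\ast) - B(H^\ast, 1)| \leq \xi\eps_0 \leq \eps/4$.

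Property (i) then holds by construction. Property (iii) uses the max-within-group rule in $H$'s group: applying the same sparsification to the worst Nash flow of $H$ produces a candidate in $H$'s group with bottleneck at least $B(H, 1) - \xi\eps_0 \geq B(H, 1) - \eps/4$, so the representative $f$ of $H$ satisfies $B(f) \geq B(H, 1) - \eps/4$. Property (ii) uses the min-across-groups rule applied to $H^\ast$'s group: $B(f) \leq B(\ft^\ast_{\mathrm{rep}}) \leq B(H^\ast, 1) + \eps/2 \leq B(H^\ast, 1) + \eps$, where $\ft^\ast_{\mathrm{rep}}$ is $H^\ast$'s representative. Property (iv) is the analogous bound for $H$. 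Both (ii) and (iv) rely on the structural claim that every $\eps/2$-Nash flow $\gt$ in a network $\H$ satisfies $B(\gt) \leq B(\H, 1) + \eps/2$.

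This structural claim is the main obstacle. I would prove it via the cut-characterization of Nash flows given in Section~\ref{s:model}: since $\gt$ is $\eps/2$-Nash, every $s-t$ path of $\H$ has bottleneck at least $B(\gt) - \eps/2$ under $\gt$, so the edges of $\H$ with $\gt$-latency at least $B(\gt) - \eps/2$ form an $s-t$ cut; any Nash flow of $\H$ must route its unit of flow across this cut and, by monotonicity of latencies, suffers a bottleneck of at least $B(\gt) - \eps/2$ on some cut edge, yielding $B(\H, 1) \geq B(\gt) - \eps/2$. A secondary technical point is that the sparsification used in (iii) requires the worst Nash flow of $H$ to route non-negligible flow on each edge of $H$ (analogous to the $\delta$-hypothesis on $H^\ast$); this can be handled either by restricting the enumeration to subnetworks where the property carries over, or by a slightly more careful error bookkeeping that absorbs the slack into $\eps_0$.
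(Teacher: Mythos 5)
Your algorithmic skeleton and the key structural claim (that any $\eps/2$-Nash flow on a fixed subnetwork has bottleneck cost within $\eps/2$ of the worst equilibrium bottleneck cost of that subnetwork) match the paper's proof, and your argument for (ii) and (iv) via that claim is essentially the paper's. The cut-based sketch of the structural claim is acceptable, though you would want to be careful that the Nash flow you compare against can be chosen to push at least as much flow through some edge of the cut; the paper handles this by explicitly constructing a Nash flow that keeps the cut saturated and reroutes the rest on top, which is a cleaner way to make the argument rigorous.

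The genuine gap is in (iii), and the ``secondary technical point'' you flag at the end is in fact the crux. You group candidates by the \emph{exact} support $H(\ft)$, so a flow belongs to $H$'s group only if it uses every edge of $H$. To prove (iii) you sparsify the worst Nash flow $g$ of the returned $H$ and claim the result lives in $H$'s group. But $g$ need not be supported on all of $H$, and even if it is, some of its edge flows may lie below $\eps_0$, in which case Lemma~\ref{l:sparse-flow} can zero them out, so the sparse flow $\gt$ lands in the group of some strict subnetwork $H(\gt)\subsetneq H$ and contributes nothing to $H$'s representative. The $\delta$-hypothesis applies only to $H^\ast$, not to the algorithm's output $H$. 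Neither of your suggested repairs closes this: you cannot ``restrict the enumeration to subnetworks where the property carries over'' without already knowing the worst Nash flows, and error bookkeeping cannot repair a changed support set. The paper's fix is a different notion of candidate solution: a pair $(H,g)$ is admitted whenever $H$ is a candidate subnetwork \emph{containing} all edges used by $g$ (possibly strictly) and $g$ is an $\eps/2$-Nash flow in $H$, and $\Bt(H)$ maximizes $B(g)$ over all such $g$, not merely those with full support $H$. With that decoupling, the sparsified $\gt$ automatically forms a candidate solution with $H$, giving $B(f)=\Bt(H)\geq B(\gt)\geq B(H,1)-\eps/4$. That decoupling of the subnetwork from the flow's support is the missing idea in your proposal.
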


\begin{proof}
Let $\eps > 0$ be a constant, and let $\e_1 = \min\{ \delta, \eps / (4\xi)\}$, and $\e_2 = \eps/2$. We show that a flow-subnetwork pair $(H, f)$ with the desired properties can be computed in time $|\P|^{O(k(\e_1))}$, where $k(\eps_1) = \lfloor \log(2m)/\min\{ 2\delta^2, \eps^2/(8\xi^2)\} \rfloor+1$,
For convenience, we say that a flow $g$ is a \emph{candidate flow} if there is a multiset $P$ of paths from $\P$, with $|P| \leq k(\e_1)$, such that $g_p = |P(p)|/|P|$, for each $p \in \P$. Namely, a candidate flow belongs to the family of sparse flows, which by Lemma~\ref{l:sparse-flow}, can approximate any other flow.
Similarly, a subnetwork $H$ is a \emph{candidate subnetwork} if there is a candidate flow $g$ such that $H$ consists of the edges used by $g$ (and only of them), and a subnetwork-flow pair $(H, g)$ is a \emph{candidate solution}, if $g$ is a candidate flow, $H$ is a candidate subnetwork that includes all the edges used by $g$ (and possibly some other edges), and $g$ is an $\e_2$-Nash flow in $H$.

By exhaustive search, in time $|\P|^{O(k(\e_1))}$, we generate all candidate flows, all candidate subnetworks, and compute the bottleneck cost $B(g)$ of any candidate flow $g$. Then, for each pair $(H, g)$, where $g$ is a candidate flow and $H$ is a candidate subnetwork, we check, in polynomial time, whether $g$ is an $\e_2$-Nash flow in $H$, and thus whether $(H, g)$ is a candidate solution.
%
%
Thus, in time $|\P|^{O(k(\e_1))}$, we determine all candidate solutions. For each candidate subnetwork $H$ that participates in at least one candidate solution, we let $\Bt(H)$ be the maximum bottleneck cost $B(g)$ of a candidate flow $g$ for which $(H, g)$ is a candidate solution. The algorithm returns the subnetwork $H$ that minimizes $\Bt(H)$, and a flow $f$ for which $(H, f)$ is a candidate solution and $\Bt(H) = B(f)$.

The exhaustive search above can be implemented in $|\P|^{O(k(\e_1))}$ time. As for the properties of the solution $(H, f)$, the definition of candidate solutions immediately implies (i), i.e., that $f$ is an $\eps/2$-Nash flow in $H$.

We proceed to show (ii), i.e., that $B(f) \leq B(H^\ast, 1) + \eps$. We recall that $H^\ast$ denotes the best subnetwork of $\G$ and $f^\ast$ denotes the worst Nash flow in $H^\ast$. Also, by hypothesis, $f^\ast_e > \delta > 0$, for all edges $e$ of $H^\ast$.

By Lemma~\ref{l:sparse-flow}, there is a candidate flow $\ft$ such that for all edges $e$ of $H^\ast$, $|\ft_e - f^\ast_e| \leq \e_1$. Thus, since $\e_1 \leq \delta$, $H^\ast$ is a candidate network, because $\ft_e > 0$ for all edges $e$ of $H^\ast$. Moreover, by the Lipschitz condition and the choice of $\e_1$, for all edges $e$ of $H^\ast$, $|c_e(\ft_e) - c_e(f^\ast_e)| \leq \eps/4$. Therefore, since $f^\ast$ is a Nash flow in $H^\ast$, $\ft$ is an $\e_2$-Nash flow in $H^\ast$, and thus $(H, \ft)$ is a candidate solution. Furthermore, $|B(\ft) - B(f^\ast)| \leq \eps/4$, i.e., the bottleneck cost of $\ft$ is within an additive term of $\eps/4$ from the worst equilibrium bottleneck cost of $H^\ast$. In particular, $B(\ft) \leq B(H^\ast, 1) + \eps/4$.

We also need to show that for any other candidate flow $g$ for which $(H^\ast, g)$ is a candidate solution, $B(g) \leq B(\ft) + 3\eps/4$, and thus $\Bt(H^\ast) \leq B(\ft) + 3\eps/4 \leq B(H^\ast, 1) + \eps$.
To reach a contradiction, let us assume that there is a candidate flow $g$ that is an $\e_2$-Nash flow in $H^\ast$ and has $B(g) > B(\ft) + 3\eps/4$. But then, we should expect that there is a Nash flow $g'$ in $H^\ast$ that closely approximates $g$ and has a bottleneck cost of $B(g') \approx B(g) > B(f^\ast)$, a contradiction.
Formally, since $g$ is an $\e_2$-Nash flow in $H^\ast$, the set of edges with $c_e(g_e) \geq B(g) - \eps/2$ comprises an $s-t$ cut in $H^\ast$. Then, by the continuity of the latency functions, we can fix a part of the flow routed essentially as in $g$, so that there is an $s-t$ cut consisting of used edges with latency $B(g) - \eps/2$, and possibly unused edges with latency at least $B(g) - \eps/2$, and reroute the remaining flow on top of it, so that we obtain a Nash flow $g'$ in $H^\ast$. But then,
\[ B(g') \geq B(g) - \eps/2 > B(\ft) + \eps/4 \geq B(f^\ast)\,, \]
which contradicts the hypothesis that $f^\ast$ is the worst Nash flow in $H^\ast$.

Therefore, $\Bt(H^\ast) \leq B(H^\ast, 1) + \eps$. Since the algorithm returns the candidate solution $(H, f)$, and not a candidate solution including $H^\ast$, $\Bt(H) \leq B(H^\ast)$. Thus, we obtain (ii), namely that $\Bt(H) = B(f) \leq B(H^\ast, 1) + \eps$.

We next show (iii), namely that $B(H, 1) \leq B(f) + \eps/4$. To this end, we let $g$ be the worst Nash flow in $H$. By Lemma~\ref{l:sparse-flow}, there is a candidate flow $\gt$ such that for all edges $e$ of $H$, $|\gt_e - g_e| \leq \e_1$, if $g_e > 0$, and $\gt_e = 0$, otherwise. Therefore, by the Lipschitz condition and the choice of $\e_1$, for all edges $e$ of $H$, $|c_e(\gt_e) - c_e(g_e)| \leq \eps/4$, if $g_e > 0$, and $c_e(\gt_e) = c_e(g_e) = 0$, otherwise. This implies that $|B(\gt) - B(g)| \leq \eps/4$, i.e., that bottleneck cost of $\gt$ is within an additive term of $\eps/4$ from the bottleneck cost of $g$. In particular, $B(g) \leq B(\gt) + \eps/4$.

We also need to show that $(H, \gt)$ is a candidate solution. Since $H$ is a candidate subnetwork and $\gt$ is a candidate flow, we only need to show that $\gt$ is an $\e_2$-Nash flow in $H$. Since $g$ is a Nash flow in $H$, the set of edges $C = \{ e: c_e(g_e) \geq B(g) \}$ comprises an $s-t$ cut in $H$. In fact, for all edges $e \in C$, $c_e(g_e) = B(g)$, if $g_e > 0$, and $c_e(g_e) \geq B(g)$, otherwise. Let us now consider the latency in $\gt$ of each edge $e \in C$. If $g_e = 0$, then $c_e(\gt_e) = c_e(g_e) \geq B(g) \geq B(\gt) - \eps/4$. If $g_e > 0$, then
\[
 B(\gt) \geq c_e(\gt_e)
        \geq c_e(g_e) - \eps/4
          =  B(g) - \eps/4
        \geq B(\gt) - \eps/2\,.
\]
Therefore, for the flow $\gt$, we have an $s-t$ cut in $H$ consisting of edges $e$ either with $\gt_e > 0$ and $B(\gt) - \eps/2 \leq c_e(\gt_e) \leq B(\gt)$, or with $\gt_e = 0$ and $c_e(\gt_e) \geq B(\gt) - \eps/4$. By the standard properties of $\eps$-Nash flows (see also in Section~\ref{s:model}), we obtain that $\gt$ is a $\e_2$-Nash flow in $H$.

Hence, we have shown that $(H, \gt)$ is a candidate solution, and that $B(g) \leq B(\gt) + \eps/4$. Therefore, the algorithm considers both candidate solutions $(H, f)$ and $(H, \gt)$, and returns $(H, f)$, which implies that $B(\gt) \leq B(f)$. Thus, we obtain (iii), namely that $B(H, 1) = B(g) \leq B(f) + \eps/4$.

To conclude the proof, we have to show (iv), namely that $B(f) \leq B(H, 1) + \eps/2$. For the proof, we use the same notation as in (iii). The argument is essentially identical to that used in the second part of the proof of (ii).
More specifically, to reach a contradiction, we assume that the candidate flow $f$, which is an $\e_2$-Nash flow in $H$, has $B(f) > B(H, 1) + \eps/2$. Then, as before, we should expect that there is a Nash flow $f'$ in $H$ that approximates $f$ and has a bottleneck cost of $B(f') \approx B(f) > B(H, 1)$, a contradiction.
Formally, since $f$ is an $\e_2$-Nash flow in $H$, the set of edges with $c_e(f_e) \geq B(f) - \eps/2$ comprises an $s-t$ cut in $H$. Then, by the continuity of the latency functions, we can fix a part of the flow routed essentially as in $f$, so that there is an $s-t$ cut consisting of used edges with latency $B(f) - \eps/2$, and possibly unused edges with latency at least $B(f) - \eps/2$, and reroute the remaining flow on top of it, so that we obtain a Nash flow $f'$ in $H$. But then,
\( B(f') \geq B(f) - \eps/2 > B(H, 1) \),
which contradicts the definition of the worst equilibrium bottleneck cost $B(H, 1)$ of $H$.
Thus, we obtain (iv), namely that $B(f) \leq B(H, 1) + \eps/2$, and conclude the proof of theorem.
\qed\end{proof}

Therefore, the algorithm of Theorem~\ref{th:scheme} returns a flow-subnetwork pair $(H, f)$ such that $f$ is an $\eps/2$-Nash flow in $H$, the worst equilibrium bottleneck cost of the subnetwork $H$ approximates the worst equilibrium bottleneck cost of $H^\ast$, since $B(H^\ast, 1) \leq B(H, 1) \leq B(H^\ast, 1)+5\eps/4$, by (ii) and (iii), and the bottleneck cost of $f$ approximates the worst equilibrium bottleneck cost of $H$, since $B(H, 1) - \eps/4 \leq B(f) \leq B(H, 1) + \eps/2$, by (iii) and (iv).

\bibliographystyle{plain}
\bibliography{braess_biblio}

\begin{thebibliography}{10}

\bibitem{Alt94}
I.~Alth\"ofer.
\newblock On sparse approximations to randomized strategies and convex
  combinations.
\newblock {\em Linear Algebra and Applications}, 99:339-355, 1994.

\bibitem{AzaEps05}
Y.~Azar and A.~Epstein.
\newblock The hardness of network design for unsplittable flow with selfish
  users.
\newblock In {\em {Proc. of the 3rd Workshop on Approximation and Online
  Algorithms (WAOA~'05)}}, LNCS~3879, pp. 41-54, 2005.

\bibitem{BO07}
R.~Banner and A.~Orda.
\newblock Bottleneck routing games in communication networks.
\newblock {\em IEEE Journal on Selected Areas in Communications},
  25(6):1173-1179, 2007.

\bibitem{Bra68}
D.~Braess.
\newblock {\"{U}ber ein paradox aus der Verkehrsplanung}.
\newblock {\em Unternehmensforschung}, 12:258-268, 1968.

\bibitem{BMI09}
C.~Busch and M.~Magdon-Ismail.
\newblock Atomic routing games on maximum congestion.
\newblock {\em Theoretical Computer Science}, 410:3337-3347, 2009.

\bibitem{CGK05}
I.~Caragiannis, C.~Galdi, and C.~Kaklamanis.
\newblock Network load games.
\newblock In {\em Proc. of the 16th Symp. on Algorithms and Computation
  (ISAAC~'05)}, LNCS~3827, pp. 809-818, 2005.

\bibitem{CDR06}
R.~Cole, Y.~Dodis, and T.~Roughgarden.
\newblock Bottleneck links, variable demand, and the tragedy of the commons.
\newblock In {\em Proc. of the 17th {ACM-SIAM} Symposium on Discrete Algorithms
  (SODA~'06)}, pp. 668-677, 2006.

\bibitem{EFM09}
A.~Epstein, M.~Feldman, and Y.~Mansour.
\newblock Efficient graph topologies in network routing games.
\newblock {\em Games and Economic Behaviour}, 66(1):115–125, 2009.

\bibitem{FHW80}
S.~Fortune, J.E. Hopcroft, and J.~Wyllie.
\newblock The directed subgraph homeomorphism problem.
\newblock {\em Theoretical Computer Science}, 10:111-121, 1980.

\bibitem{FKS09}
D.~Fotakis, A.C. Kaporis, and P.G. Spirakis.
\newblock Efficient methods for selfish network design.
\newblock In {\em {Proc. of the 36th Colloquium on Automata, Languages and
  Programming (ICALP-C~'09)}}, LNCS~5556, pp. 459-471, 2009.

\bibitem{HZ07}
H.~Hou and G.~Zhang.
\newblock The hardness of selective network design for bottleneck routing
  games.
\newblock In {\em {Proc. of the 4th Conference on Theory and Applications of Models
  of Computation (TAMC~'07)}}, LNCS~4484, pp. 58-66, 2007.

\bibitem{KP99}
E.~Koutsoupias and C.~Papadimitriou.
\newblock Worst-case equilibria.
\newblock In {\em Proc. of the 16th Symposium on Theoretical Aspects of
  Computer Science (STACS~'99)}, LNCS~1563, pp. 404-413,
  1999.

\bibitem{LRT04}
H.~Lin, T.~Roughgarden, and {\'E}.~Tardos.
\newblock A stronger bound on Braess's paradox.
\newblock In {\em Proc. of the 15th {ACM-SIAM} Symposium on Discrete Algorithms
  (SODA~'04)}, pp. 340-341, 2004.

\bibitem{LRTW05}
H.~Lin, T.~Roughgarden, \'{E}. Tardos, and A.~Walkover.
\newblock {Braess's paradox, Fibonacci numbers, and exponential
  inapproximability}.
\newblock In {\em Proc. of the 32th Colloquium on Automata,
  Languages and Programming (ICALP~'05)}, LNCS~3580, pp. 497-512, 2005.

\bibitem{MMST06}
V.~Mazalov, B.~Monien, F.~Schoppmann, and K.~Tiemann.
\newblock Wardrop equilibria and price of stability for bottleneck games with
  splittable traffic.
\newblock In {\em Proc. of the 2nd Workshop on Internet and Network Economics
  (WINE~'06)}, LNCS~4286, pp. 331-342, 2006.

\bibitem{Mil06}
I.~Milchtaich.
\newblock Network topology and the efficiency of equilibrium.
\newblock {\em Games and Economic Behavior}, 57:321–346, 2006.

\bibitem{Rou06}
T.~Roughgarden.
\newblock On the severity of Braess's paradox: Designing networks for selfish
  users is hard.
\newblock {\em Journal of Computer and System Sciences}, 72(5):922-953, 2006.

\end{thebibliography}

\appendix
\section{Appendix}

\subsection{The Proof of Proposition~\ref{pr:scaling}}
\label{app:s:scaling}

Since the traffic rate of both $\G$ and $\G'$ is $r$, any $\G$-feasible flow $f$ is also $\G'$-feasible. Moreover, the $\G'$-latency of $f$ on each edge $e$ is $\alpha c_e(f_e)$. This immediately implies that $B_{\G'}(f) = \alpha B_{\G}(f)$, and that $f$ is a Nash flow (resp. optimal flow) of $\G$ iff $f$ is a Nash flow (resp. optimal flow) of $\G'$. \qed

\end{document}